\def\UrlSpecials{\do\~{\kern -.15em\lower .7ex\hbox{~}\kern .04em}} \catcode`~=13 
\newcommand{\nn}{\nonumber}
\newcommand{\calA}{\mathcal{A}}
\newcommand{\calC}{\mathcal{C}}
\newcommand{\calD}{\mathcal{D}}
\newcommand{\calH}{\mathcal{H}}
\newcommand{\calI}{\mathcal{I}}
\newcommand{\calP}{\mathcal{P}}
\newcommand{\calS}{\mathcal{S}}
\newcommand{\calW}{\mathcal{W}}
\newcommand{\calX}{\mathcal{X}}
\newcommand{\calY}{\mathcal{Y}}
\newcommand{\calZ}{\mathcal{Z}}
\newcommand{\rmb}{\mathrm{b}}
\newcommand{\rmd}{\mathrm{d}}
\newcommand{\rme}{\mathrm{e}}
\newcommand{\rmJ}{\mathrm{J}}
\newcommand{\rmv}{\mathrm{v}}
\newcommand{\bbE}{\mathbb{E}}
\newcommand{\bbN}{\mathbb{N}}
\newcommand{\bbR}{\mathbb{R}}
\newcommand{\bbZ}{\mathbb{Z}}
\newcommand{\scF}{\mathscr{F}}
\DeclareMathAlphabet{\mathbsf}{OT1}{cmss}{bx}{n}
\DeclareMathAlphabet{\mathssf}{OT1}{cmss}{m}{sl}% slanted sans serif
\DeclareSymbolFont{bsfletters}{OT1}{cmss}{bx}{n}  
\DeclareSymbolFont{ssfletters}{OT1}{cmss}{m}{n}
\DeclareMathSymbol{\bsfGamma}{0}{bsfletters}{'000}
\DeclareMathSymbol{\ssfGamma}{0}{ssfletters}{'000}
\DeclareMathSymbol{\bsfDelta}{0}{bsfletters}{'001}
\DeclareMathSymbol{\ssfDelta}{0}{ssfletters}{'001}
\DeclareMathSymbol{\bsfTheta}{0}{bsfletters}{'002}
\DeclareMathSymbol{\ssfTheta}{0}{ssfletters}{'002}
\DeclareMathSymbol{\bsfLambda}{0}{bsfletters}{'003}
\DeclareMathSymbol{\ssfLambda}{0}{ssfletters}{'003}
\DeclareMathSymbol{\bsfXi}{0}{bsfletters}{'004}
\DeclareMathSymbol{\ssfXi}{0}{ssfletters}{'004}
\DeclareMathSymbol{\bsfPi}{0}{bsfletters}{'005}
\DeclareMathSymbol{\ssfPi}{0}{ssfletters}{'005}
\DeclareMathSymbol{\bsfSigma}{0}{bsfletters}{'006}
\DeclareMathSymbol{\ssfSigma}{0}{ssfletters}{'006}
\DeclareMathSymbol{\bsfUpsilon}{0}{bsfletters}{'007}
\DeclareMathSymbol{\ssfUpsilon}{0}{ssfletters}{'007}
\DeclareMathSymbol{\bsfPhi}{0}{bsfletters}{'010}
\DeclareMathSymbol{\ssfPhi}{0}{ssfletters}{'010}
\DeclareMathSymbol{\bsfPsi}{0}{bsfletters}{'011}
\DeclareMathSymbol{\ssfPsi}{0}{ssfletters}{'011}
\DeclareMathSymbol{\bsfOmega}{0}{bsfletters}{'012}
\DeclareMathSymbol{\ssfOmega}{0}{ssfletters}{'012}
\newcommand{\tilX}{\tilde{X}}
\newcommand{\hatz}{\hat{z}}
\newcommand{\tilz}{\tilde{z}}
\newcommand{\barz}{\bar{z}}
\newcommand{\eps}{\varepsilon}
\def\fndot{\, \cdot \,}
\DeclareMathOperator*{\argmin}{arg\,min}
\newcommand{\bone}{\mathbbm{1}}
\newtheorem{theorem}{Theorem} 
\newtheorem{lemma}{Lemma}
\newtheorem{definition}{Definition}
\newcommand{\markov}{\mathrel{\multimap}\joinrel\mathrel{-}%
\joinrel\mathrel{\mkern-6mu}\joinrel\mathrel{-}}
\newcommand{\qednew}{\nobreak \ifvmode \relax \else
      \ifdim\lastskip<1.5em \hskip-\lastskip
      \hskip1.5em plus0em minus0.5em \fi \nobreak
      \vrule height0.75em width0.5em depth0.25em\fi}
\begin{document}
\flushbottom
\allowdisplaybreaks[1]

\title{Minimum Rates of Approximate Sufficient Statistics} 

\author{Masahito Hayashi,$^\dagger$   {\em Fellow, IEEE},  $\,\,   \,$ Vincent Y.~F.\ Tan,$^\ddagger$ {\em Senior Member, IEEE}
\thanks{$^\dagger$Masahito~Hayashi is with the  Graduate School of Mathematics, Nagoya University, and the Center for Quantum Technologies (CQT),  National University of Singapore   (Email: masahito@math.nagoya-u.ac.jp).   } 
\thanks{$\ddagger$Vincent~Y.~F. Tan is with the Department of Electrical and Computer Engineering and the Department of Mathematics, National University of Singapore (Email:  vtan@nus.edu.sg).} \thanks{This paper was presented in part~\cite{HayashiTan17} at the 2017 International Symposium on Information Theory  (ISIT) in Aachen, Germany.}  \thanks{MH is partially supported  by a MEXT Grant-in-Aid for Scientific Research (B) No.~16KT0017.
MH is also partially supported by the Okawa Research Grant
and Kayamori Foundation of Informational Science Advancement. The Centre for Quantum Technologies is funded by the Singapore Ministry of Education and the National Research Foundation as part of the Research Centres of Excellence programme.}\thanks{VYFT is partially supported an NUS Young Investigator Award (R-263-000-B37-133) and a Singapore Ministry of Education Tier 2 grant ``Network Communication with Synchronization Errors: Fundamental Limits and Codes'' (R-263-000-B61-112). 
}  
   }

%%%%%%%%%%%%%%%%%%%%%%%%%%%%%%%%%%%%%%%%%%%%%%%%%%%%%%%%%%%%%%
%%%%%%%%%%%%%%%%%%%%%%%%%%%%%%%%%%%%%%%%%%%%%%%%%%%%%%%%%%%%%%

\maketitle

\begin{abstract} 
Given a  sufficient statistic for a parametric family of distributions, one can estimate the parameter without access to the data. However, the memory  or code size for storing the sufficient statistic may nonetheless still be prohibitive. Indeed,  for $n$ independent samples drawn from a $k$-nomial distribution with $d=k-1$ degrees of freedom, the length of the code  scales as $d\log n+O(1)$. In many applications, we may not have a useful notion of sufficient statistics (e.g., when the parametric family is not an exponential family) and we also may not need to reconstruct the generating distribution exactly. By adopting a Shannon-theoretic approach in which we allow a small error in estimating the generating distribution, we construct various   {\em approximate sufficient statistics} and show that the code length can be reduced to $\frac{d}{2}\log n+O(1)$. %We also note that the locality assumption that is used to describe the notion of local approximate sufficient statistics when the parametric family is not an exponential family can be dispensed of.
 We consider errors  measured according to the relative entropy and variational distance criteria. For the code constructions, we leverage Rissanen's minimum description length principle, which yields a non-vanishing error measured according to the relative entropy. For the converse parts, we use Clarke and Barron's formula for the relative entropy of a parametrized distribution and the corresponding mixture distribution.  %The limitation of   this method is that only a 
 However, this method only yields a weak converse for the variational distance. We develop new techniques to achieve vanishing errors and we also prove strong converses. The latter  means that even if the code is allowed to have a non-vanishing error, its length must still be at least~$\frac{d}{2}\log n$.
\end{abstract}  
\begin{IEEEkeywords} 
Approximate sufficient statistics, Minimum rates,  Memory size reduction, Minimum description length, Exponential families, Pythagorean theorem, Strong converse
\end{IEEEkeywords}
 
\section{Introduction}

%Sufficient statistics is a very important concept in statistics.
%When the distribution family forms an exponential family, the MLE is a sufficient statistics.
%However, for a general distribution family, we do not necessarily have a usuful sufficient statistics.
%Instead of sufficient statistics, we often focus on local asymptotic sufficient statistics by relaxing the condition for sufficient statistics.
%For example, under a suitable regulality condition, the MLE is a local asymptotic sufficient statistics.
%However, no paper discusses how much size is required for asymptotic sufficient statistics.
%To answer this question, we introduce the minimum rate of asymptotic sufficient statistics and show that it is $d/2 \log n $, where $d$ is the dimension of the paramter of the family of distribution.
%Here, we notice that the locality condition can be dropped.
%That is, our asymptotic sufficient statistics works globally.

The notion of {\em sufficient statistics} is a fundamental  and ubiquitous concept in statistics and information theory~\cite{Lehmann98, Cov06}. Consider a random variable $X \in\calX$ whose distribution $P_{X|Z=z}$ depends on an unknown parameter $z\in\calZ$. Typically in detection and estimation problems, we are   interested in learning  the unknown parameter $z$. In this case, it is often unnecessary to use the full dataset $X$. Rather a function of the data $Y=f(X)\in\calY$ usually suffices. If there is no loss in the performance of learning $Z$ given $Y$ relative to the case when one is given  $X$, then $Y$ is called a sufficient statistic relative to the family $\{ P_{X|Z=z} \}_{z\in\calZ}$. We may then write
\begin{equation}
P_{X|Z=z}(x) = \sum_{y\in\calY} P_{X|Y}(x|y) P_{Y|Z=z}(y),\;\forall\, (x,z)\in\calX\times\calZ \label{eqn:mc_intro}
\end{equation}
or more simply  that 
$X \markov Y\markov Z$  forms a Markov chain in this order. Because $Y$ is a function of $X$, it is also true that $I(Z;X) = I(Z;Y)$.   This intuitively means that the sufficient statistic $Y$ provides as much information about the parameter $Z$ as the original data $X$ does.

For concreteness in our discussions, we often (but not always) regard the family $\{ P_{X|Z=z} \}_{z\in\calZ}$ as an exponential family~\cite{Wai08}, i.e., $P_{X|Z=z}\propto\exp\big( \sum_i z_i Y_i(x)\big)$. This class of distributions is parametrized by a set of natural parameters $z=\{z_i\}$ and a set of natural   statistics $Y(x)=\{Y_i(x)\}$, which is a function of the data. The natural   statistics or maximum likelihood estimator (MLE) are known to be sufficient statistics of the exponential family. In many applications, large datasets are prevalent. The one-shot model described above will then be  replaced by an $n$-shot one in which the dataset consists of $n$ independent and identically distributed (i.i.d.)  random variables $X^n = (X_1, X_2, \ldots, X_n)$ each distributed according to $P_{X|Z=z}$ where the exact $z$ is unknown. If the support of $X$ is finite, the distribution is a $k$-nomial distribution (a discrete distribution taking on at most $k$ values) and the     empirical distribution or type~\cite{Csi97} of $X^n$ is a sufficient statistic for learning $z$. However,  the number of types with denominator $n$ on an alphabet with $k$ values is $\binom{ n + k  -1}{k-1} = \Theta(n^{k-1})$~\cite{Csi97}. We are interested in this paper in the ``memory size'' to store the various types. We imagine that each type is allocated a single storage location in the memory stack and the {\em memory size} is the number of storage locations.  Thus, the memory size required to estimate parameter $z$ in a maximum likelihood manner (or distribution $P_{X|Z=z}$) is at least  $\Theta(n^{k-1})$ if the (index of the) type is stored.  The exponent $d=k-1$ here is the number of degrees of freedom in the distribution family, i.e., the dimensionality of the space $\calZ$ that $z$ belongs to. Can we do better than a memory size of $\Theta(n^{d})$? The answer to this question depends on the strictness of the recoverability condition of  $P_{X|Z=z}$. If $P_{X|Z=z}$ is to be recovered {\em exactly}, then the Markov condition in \eqref{eqn:mc_intro} is necessary and no reduction of the  memory size is possible.  However, if  $P_{X|Z=z}$ is to be recovered only {\em approximately}, we can indeed reduce the memory size. This is one motivation of the current work. 

In addition, going beyond exponential families, for general distribution families, we do not necessarily have a useful and universal notion of sufficient statistics.
Thus, we often focus on {\em local asymptotic sufficient statistics} by relaxing the condition for sufficient statistics.
For example, under  suitable regularity conditions~\cite{vanderVaart,LeCam,LeCam60}, the MLE forms a set of local asymptotic sufficient statistics.
However,   there is no prior work that discusses the required memory size if we allow the sufficient statistics to be approximate in some appropriate sense.
To address this issue, we introduce the notion of the minimum coding length of certain asymptotic sufficient statistics and show that it is $\frac{d}{2} \log n +O(1)$, where $d$ is the dimension of the parameter of the family of distribution. Hence, the minimum coding rate is the pre-log coefficient $\frac{d}{2}$, improving over the original $d$ when exact sufficient statistics are used. 
Here, we also notice that the locality condition can be dropped.
That is, our asymptotic sufficient statistics works globally. This is another motivation for the current paper.

%In our problem formulation, we   adopt a Shannon-theoretic approach and demand only that the $n$-fold product distribution $P_{X|Z=z}^n$  be estimated to within some tolerable asymptotic error $\delta\ge 0$ as the length of the code $n$ grows. Under this more relaxed condition, the stringent condition in~\eqref{eqn:mc_intro} need not hold; rather we may consider various  notions of {\em approximate sufficient statistics}. We then evaluate minimum coding rate of the approximate sufficient statistics  under various fidelity/error metrics (relative entropy, variational distance) and prove that the exponent can be reduced to $\frac{d}{2}$ under reasonable  regularity assumptions. Thus, the memory size is  reduced from the original $\Theta(n^{d})$. Our formulation is different from the conventional lossless or lossy data compression system studied by Shannon~\cite{Shannon48,Shannon59b} because we are estimating the distribution $P_{X|Z=z}^n$ and not recovering the original dataset $X^n$. Hence, new mathematical techniques need to be developed.

%\vspace{-.1in}
\subsection{Related Work} \label{sec:related}
%\vspace{-.05in}
Our problem   is different from  lossy and lossless conventional source coding~\cite{Shannon48,Shannon59b} because we do not seek to reconstruct the data   $X^n$ but rather a distribution on $\calX^n$.  Hence, we need to generalize standard  data compression schemes.  Such a generalization has been discussed in the context of quantum data compression by Schumacher \cite{Schumacher95}.
Here,  the source that generates the state  cannot be directly observed. 
Schumacher's encoding process involves compressing the original dataset into a  memory stack with a smaller size. The decoding process involves recovering certain statistics of the data to within a prescribed error bound $\delta\ge 0$.  

 Reconstruction of distributions has also been studied in the context of the {\em information bottleneck (IB) method}~\cite{Tishby99,Chechik05,Harremores07}. In the IB method, given a joint distribution $P_{X,Y}$, one seeks to find the best tradeoff between accuracy and complexity when summarizing a random variable $X$ and an observed and correlated variable $Y$. More precisely, one finds a conditional   distribution $P_{\tilX|X}$ that minimizes $I(\tilX;X)-\beta I(\tilX;Y)$ where $\beta>0$ can be regarded as a Lagrange multiplier. The random variable $\tilX$ is then regarded as a summarized version of $X$.   Although such a formalism is a generalization of the notion of sufficient statistics from parametric statistics to arbitrary distributions, it differs from the present work because our work is concerned with finding minimum rates in an asymptotic and information-theoretic framework. 

Recently, Yang, Chiribella and Hayashi~\cite{Yang16} extended   Schumacher's \cite{Schumacher95} compression system to a special quantum model. In particular, the authors considered a notion of approximate sufficient statistics in the quantum setting~\cite{Petz86, Koashi} when the data is generated in an i.i.d.\ manner. They considered only the so-called blind setting~\cite[Ch.~10]{HayashiBook2017} and also only showed a weak converse.   We note that  there have been recent  developments of the notion of approximate sufficient statistics and approximate Markov chains in the quantum information literature~\cite{sutter16, fawzi15} but the problem studied here and the objectives are different from the existing works.

Another related line of works in the classical information theory literature are the seminal ones by Rissanen on universal variable-length source coding and model selection  \cite{Rissanen83,Rissanen84}.  Under the minimum description length  (MDL) framework, he introduced a two-step encoding   process to obtain a prefix-free source code for $n$ data samples generated from a mixture of  i.i.d.\ distributions. The   purpose of Rissanen's compression system is to obtain a  compression system for  data generated under a mixture distribution. He showed that when the dimensionality of the data is $d$, the optimal redundancy over the Shannon entropy is $\frac{d}{2}\log n +O(1)$. Merhav and Feder~\cite{merhav95} extended Rissanen's analysis to both the minimax and Bayesian (maximin) senses. Clarke and Barron~\cite{clarke90,clarke94} refined Rissanen's analysis and determined the constant (in   asymptotic expansions) under various regularity assumptions. While we make heavy use of some of Rissanen's coding ideas and Clarke and Barron's asymptotic expansions for the relative entropy between a parametrized distribution and a mixture, the problem setting we study is different. Indeed,  the main ideas in Rissanen's work \cite{Rissanen83,Rissanen84} are only helpful for us to establish the achievability parts of our theorems with {\em non-zero} asymptotic  error for the relative entropy criterion (see Lemma \ref{lem:mdl}). Similarly, the main results of Clarke and Barron's work~\cite{clarke90,clarke94} can only lead to a {\em weak converse} under the  variational distance criterion  (see Lemma~\ref{lem:clarke_barron}).  % cannot be obtained from previous works~\cite{clarke90,clarke94,merhav95,Rissanen83,Rissanen84}. 
Hence, we need to develop new coding techniques and converse ideas to satisfy the more stringent constraints on the code sequences.

% and so new techniques as well as new adaptations of existing techniques need to be employed

\subsection{Main Contributions and Techniques} \label{sec:main_contr}
%We consider two separate settings known as the {\em blind} and {\em visible} settings. In the former, the encoder can directly observe the dataset $X^n$; in the latter the encoder directly observes the parameter of interest $z$. The visible setting may appear to be less natural but such a generalized setting is useful for the proofs of the converse parts.   Yang, Chiribella and Hayashi~\cite{Yang16}  only considered the blind setting.  % with the variational distance criterion and only proved a  weak converse.
% We  consider {\em both} blind and visible settings. % and prove strong converses. This means that, under appropriate assumptions on the parametric family, even if we allow the error induced by the sequence of codes to be non-vanishing, we can show that the coding length cannot be improved over the case where we impose that the error vanishes with the blocklength.

We provide a precise Shannon-theoretic problem formulation for   compression for the model parameter $z$ with an allowable asymptotic error $\delta\ge 0$ on the reconstructed distribution. This error is measured under the relative entropy and variational distance criteria.
  We use some of Rissanen's ideas for encoding in~\cite{Rissanen83,Rissanen84} to show that the memory size can be reduced   to approximately $\Theta(n^{\frac{d}{2}})$ resulting in a coding length of $\frac{d}{2}\log n +O(1)$.   Note that Rissanen~\cite{Rissanen83,Rissanen84}  did not explicitly provide the decoders for the problem he considered; we explicitly specify various decoders. Moreover, assuming that the parametric family of distributions is  an exponential family~\cite{Wai08}, we also improve on the evaluations that are inspired by Rissanen (see Lemma \ref{lem:exp}). In particular, for exponential families, we propose codes whose asymptotic errors measured according to the relative entropy criterion are equal to zero.  Furthermore, we consider two separate settings known as the {\em blind} and {\em visible} settings. In the former, the encoder can directly observe the dataset $X^n$; in the latter the encoder directly observes the parameter of interest $z$. The differences between these two settings are discussed in more detail in~\cite[Ch.~10]{HayashiBook2017}. The visible setting may appear to be less natural but such a generalized setting is useful for the proofs of the converse parts.   Yang, Chiribella and Hayashi~\cite{Yang16}  only considered the special case of the qubit model. They also only considered the blind setting.  % with the variational distance criterion and only proved a  weak converse.
 We  consider {\em both} blind and visible settings and show, somewhat surprisingly, that the coding length is essentially unchanged.
  
%  Yang, Chiribella and Hayashi~\cite{Yang16}  only considered the blind setting.  % with the variational distance criterion and only proved a  weak converse.
% We  consider {\em both} blind and visible settings and show, somewhat surprisingly, that the coding length is essentially unchanged.

Another significant contribution of our work is in the strengthening of the converse  in~\cite{Yang16}. %As mentioned, we consider only the visible setting for the converse part as it is more general than the blind settings. 
In our strong converse proof for the relative entropy error criterion, we employ the Pythagorean theorem for relative entropy, a fundamental concept in information geometry~\cite{Ama00}.  Furthermore, we use Clarke and Barron's formula~\cite{clarke90, clarke94} to provide a weak converse under the variational distance error criterion. This clarifies the relation between our problem and  Clarke and Barron's formula~\cite{clarke90, clarke94}.  We significantly strengthen this method to obtain a strong converse (see Lemma \ref{lem:str}); in contrast~\cite{Yang16} only proves a weak converse.  That is, we show that  if the error is allowed to be non-vanishing (even if it is arbitrarily large for the relative entropy criterion  and arbitrarily close to $2$ for the variational distance criterion), we  would still require a memory size of at least $n^{d(\frac{1}{ 2}-\eta)}$ for any $\eta>0$ for all sufficiently large~$n$.
%\subsection{Paper Organization}

This paper is organized as follows. In Section \ref{sec:setup}, we formulate the problem precisely. We  state the main result Theorem \ref{thm:vis} in Section \ref{sec:main_res}. The results are discussed in the context of exact sufficient statistics and exponential families in Section~\ref{sec:suff_stat}. We prove the direct parts of Theorem \ref{thm:vis} in Section~\ref{sec:direct},   leveraging ideas from  Rissanen's seminal works~\cite{Rissanen83,Rissanen84} on universal data compression. We prove the converse parts of Theorem \ref{thm:vis} in Section \ref{sec:conv} by leveraging the Pythagorean theorem in information geometry~\cite{Ama00} and Clarke and Barron's formula~\cite{clarke90, clarke94}. We conclude our discussion in Section \ref{sec:discuss}. 
\section{Problem Formulation } \label{sec:setup}
Let $\calX$ be a   set and let $\calP(\calX)$ denote the set of distributions (e.g., probability mass functions) on $\calX$. We consider a family of distributions $\{P_{X|Z=z } \}_{z\in\calZ} \subset \calP(\calX)$ parametrized by a vector parameter $z\in\calZ\subset\bbR^d$. We assume that $n$ independent and identically distributed (i.i.d.) random variables $X^n = (X_1, \ldots, X_n)$ each taking values  in $\calX$ and drawn from $P_{X|Z=z}$. The underlying parameter $Z$, which is random,    follows a distribution $\mu(\rmd z)$, which is absolutely continuous with respect to the Lebesgue measure on $\bbR^d$.  We will often use the following notations: Given conditional distributions $P_{X|Y}$ and $P_{Y|Z}$, respectively let the joint and marginal probabilities conditioned on $Z=z$ be 
\begin{align}
(P_{X|Y} \times P_{Y|Z})(x,y|z)  &:=P_{X|Y}(x|y) P_{Y|Z}(y|z),\quad \mbox{and} \label{eqn:times_notation}\\
(P_{X|Y}\cdot P_{Y|Z})(x|z) &:= \sum_y (P_{X|Y} \times P_{Y|Z})(x,y|z).\label{eqn:dot_notation}
\end{align}
Standard asymptotic notation such as $o(\cdot)$, $O(\cdot)$, $\Omega(\cdot)$ and $\Theta$ will be used throughout; $f_n=o(g_n)$ iff $\varlimsup_{n\to\infty}|f_n/g_n|=0$, $f_n=O(g_n)$ iff $\varlimsup_{n\to\infty}|f_n/g_n|<\infty$, $f_n=\Omega(g_n)$ iff $ g_n = O(f_n)$ and $f_n=\Theta(g_n)$ iff $f_n=O(g_n)$ and $f_n=\Omega(g_n$). Standard information-theoretic notation such as entropy $H(\cdot)$ and mutual information $I(\cdot;\cdot)$  \cite{Cov06} will also be used. Finally, $\| \cdot\|$ and $\|\cdot\|_1$ denote the $\ell_2$ and $\ell_1$ norms of finite-dimensional vectors respectively. 
\subsection{Definitions of Codes} \label{sec:codes}
We consider two classes of codes~\cite[Ch.~10]{HayashiBook2017} for the problem of interest:

\begin{definition}[Blind code] \label{def:blind}
A size $M_n$  {\em blind code} of  $\calC_{\rmb,n}:=(f_{\rmb,n},\varphi_n)$    consists of  
\begin{itemize}
\item A  stochastic  encoder (transition kernel) $f_{\rmb,n} : \calX^n \to\calY_n := \{1,\ldots, M_n\}$;
\item A decoder $\varphi_n :\calY_n \to\calP(\calX^n)$.
\end{itemize}
\end{definition}

Observe that this definition of a code is similar to that for source coding except that the decoder outputs {\em distributions} on $\calX^n$ instead of length-$n$ strings in $\calX^n$.  We often consider a more relaxed condition for the encoder as follows. In the visible setting, the encoder does not only have access to the random vector $X^n$ but also to  the parameter $z\in\calZ$. 

\begin{definition}[Visible code]\label{def:vis}
A  size $M_n$ {\em visible code}  $\calC_{\rmv,n}:=(f_{\rmv,n},\varphi_n)$  consists of 
\begin{itemize}
\item A   stochastic  encoder (transition kernel) $f_{\rmv,n} : \calZ\to\calY_n := \{1,\ldots, M_n\}$;
\item A decoder $\varphi_n :\calY_n \to\calP(\calX^n)$.
\end{itemize}
\end{definition}

We note that any blind encoder $f_{\rmb,n}$ can be regarded as a special case of a visible encoder $f_{\rmv,n}$ because the visible encoder $f_{\rmv,n}$ can be written in terms of the blind encoder $f_{\rmb,n}$ and the distribution $P_{X|Z=z}^n$ as   follows
\begin{equation}
f_{\rmv,n}(z) :=\sum_{x^n \in\calX^n} f_{\rmb,n}(x^n) P_{X|Z=z}^n(x^n),\quad\forall\,z\in\calZ. \label{eqn:bv}
\end{equation}

\subsection{Error Criteria}\label{sec:error_crit}
The performance of any code is characterized by two quantities. First, we desire the coding length $\log M_n=\log |\calY_n|$ to be as short as possible.  Next we desire a small error. To define an error criterion precisely, we notice that the reconstructed distribution on $\calX^n$ (in the visible case) is $\varphi_n\cdot  f_{\rmv,n}(z)$ which is defined as  
\begin{align}
%\Pr\left\{ \varphi_n\cdot  f_{\rmv,n}(z)=P_{X^n} \right\} := \sum_{y \in\calY_n}\Pr\left\{  f_{\rmv,n}(z) = y  \right\}\Pr\left\{ \varphi_n( y )=P_{X^n} \right\}. \label{eqn:compose}
\!\!\!\big(\varphi_n\cdot  f_{\rmv,n}(z)\big)(x^n) \!= \!\sum_{y \in\calY_n}\Pr\left\{  f_{\rmv,n}(z)\! =\! y  \right\}\big(\varphi_n(y) \big)(x^n).\label{eqn:compose}
\end{align}
%This follows from the same notational convention  as in~\eqref{eqn:times_notation}.   
Hence the code has a smaller error when the distribution $\varphi_n\cdot  f_{\rmv,n}(z)$ is closer to the original distribution $P_{X|Z=z}^n$ for each $z\in\calZ$. To evaluate the difference between the two distributions, we consider an error or fidelity function $F$ whose inputs are distributions on the same probability space. The average error is defined as 
\begin{align}
\eps_\rmv( f_{\rmv,n},\varphi_n ) :=\int_{\calZ} F( \varphi_n\cdot  f_{\rmv,n}(z) , P_{X|Z=z}^n )\, \mu(\rmd z).  \label{eqn:vis_err}
\end{align}
For a blind code, in view of~\eqref{eqn:bv}, we similarly define 
\begin{align}
&\eps_\rmb( f_{\rmb,n},\varphi_n )   =\eps_\rmv( f_{\rmb,n}\cdot P_{X|Z=z}^n  ,\varphi_n ) \nn\\* 
&\qquad\qquad  :=\int_{\calZ} F( \varphi_n\cdot  f_{\rmv,n} \cdot P_{X|Z=z}^n , P_{X|Z=z}^n )\, \mu(\rmd z).  \label{eqn:bl_err}
\end{align}
%It is natural to impose some conditions on the error function $F$. For this purpose,   define the norm $\|z\|_C:=\sqrt{ \sum_{i,j}z_i z_j C_{ij}}$ for a given positive definite matrix $C$. We often  require $F$ to satisfy
%\begin{enumerate}
%\item $F(P,P')=0$ if and only if $P=P'$;
%\item The approximation 
%\begin{equation}
%F(P_{X|Z=z}^n , P_{X|Z=z'}^n) = n \| z-z'\|_C^2 + o (n  \| z-z'\|_C^2) 
%\end{equation}
%holds for a suitable positive definite matrix $C$;
%\item The joint convexity condition 
%\begin{equation}
%F\bigg( \sum_i p_i P_i ,\sum_i p_i Q_i \bigg)\le \sum_i p_i F(P_i, Q_i)
%\end{equation}
%holds for any probability mass function $\{p_i\}$ and distributions $\{P_i\}, \{Q_i\}$.
%\end{enumerate}

In this paper, we consider two distance measures, namely the {\em relative entropy} $D(P\| Q) := \sum_x P(x)(\log P(x)-\log Q(x))$ and the {\em variational distance}\footnote{Unlike some papers, we define the variational distance without the coefficient of $\frac{1}{2}$ multiplying $\|P-Q\|_1$ so $0\le\|P-Q\|_1\le 2 $. }   (also known as the {\em total variation distance})  $\|P-Q\|_1:=\sum_x |P(x)-Q(x)|$. Generalizations of these ``distances'' to continuous-alphabet distributions are performed in the usual manner. We denote the errors in the blind and visible cases~\cite[Ch.~10]{HayashiBook2017} when we use the variational distance as $\eps_\rmb^{(1)}$ and $\eps_\rmv^{(1)}$ respectively. Similarly, we denote the errors in the blind and visible cases when we use the relative entropy as $\eps_\rmb^{(2)}$ and $\eps_\rmv^{(2)}$ respectively. The size of a code $\calC_{\rmb,n}$ is denoted as $|\calC_{\rmb,n}|=|\calY_n|$. 

\subsection{Definitions of Minimum Compression Rates and   Properties}
\begin{definition}[Minimum  Compression Rate]
Let $\delta\ge 0$. We define the {\em minimum  compression rate for blind codes} for a given parametric family $\{ P_{X|Z=z} \}_{z\in\calZ}$ as 
\begin{align}
R_{\rmb}^{ (i) }(\delta) :=\inf_{ \{\calC_{\rmb,n} \}_{n\in\bbN }} \left\{ \varlimsup_{n\to\infty} \frac{\log  |\calC_{\rmb,n} | }{\log n} \,\bigg|\, \varlimsup_{n\to\infty}\eps_\rmb^{(i)}(\calC_{\rmb,n})\le\delta   \right\} \label{eqn:Rb}
\end{align}
where $i=1,2$ denotes whether the error function is the variational distance or relative entropy respectively. In a similar manner, we define the {\em minimum compression rate  for visible codes} for a given parametric family $\{ P_{X|Z=z} \}_{z\in\calZ}$ as 
\begin{align}
R_{\rmv}^{ (i) }(\delta) :=\inf_{ \{\calC_{\rmv,n} \}_{n\in\bbN }} \left\{ \varlimsup_{n\to\infty} \frac{\log  |\calC_{\rmv,n} | }{\log n} \,\bigg|\, \varlimsup_{n\to\infty}\eps_\rmv^{(i)}(\calC_{\rmv,n})\le\delta   \right\}.\label{eqn:Rv}
\end{align}
\end{definition}
To understand this definition, we note that if $R_{\rmb}^{ (i) }(\delta) = c>0$, then  for every $\epsilon>0$,  there exists a sequence of codes  $\{\calC_{\rmb,n}\}_{n\in\bbN}$ with asymptotic error no larger than $\delta$ and memory or coding length upper bounded as 
$|\calC_{\rmb,n} |\le  n^{c+\epsilon}$ 
 for $n$ large enough. Moreover, there is no sequence of codes with with asymptotic error no larger than $\delta$ and with $|\calC_{\rmb,n} |\le n^{c-\epsilon}$.  
 
 The definition of the minimum compression rate differs significantly from traditional source coding in Shannon theory~\cite{Cov06} where the normalization of the coding length $\log|\calC_{\rmb,n}|$ is $n$  and not $\log n$. Here, we find that the  normalization that yields meaningful results is $\log n$ as the memory size scales polynomially  (and not exponentially) with the blocklength, i.e., $|\calY_n|\approx n^c$ for some $c>0$.  
 
From the above definitions, it is clear that for any $0\le\delta\le\delta'$ and $i = 1,2$, we have 
\begin{align}
R_{a}^{ (i) }(\delta') &\le R_{a}^{ (i) }(\delta), \quad \,\, a = \rmb,\rmv,\label{eqn:smaller_delta}\\
R_{a}^{ (1) }(0) &\le R_{a}^{ (2) }(0), \quad\, a = \rmb,\rmv, \label{eqn:use_pinsker}\\
R_{\rmv}^{ (i) }(\delta) &\le R_{\rmb}^{ (i) }(\delta). \label{eqn:visible_blind}
\end{align}
Note that \eqref{eqn:use_pinsker} follows from Pinsker's inequality (i.e., $D(P\| Q)\ge\frac{\log\rme}{2}\|P-Q\|_1^2$)  because a vanishing relative entropy implies the same for the  variational distance.

\section{Assumptions and Main Results} \label{sec:main_res}
Let $J_z$ be the Fisher information matrix of the   parametric family $\{ P_{X|Z=z } \}_{z\in\calZ}$. This matrix has elements
\begin{equation}
[J_z]_{i,j} \!=\! \bbE_z\left[  \left(\frac{\partial \log P_{X|Z=z}(X)}{\partial z_i }\right)\! \left(\frac{\partial\log P_{X|Z=z}(X)}{\partial z_j }\right)\right], \label{eqn:fisher}
\end{equation}
where $\bbE_z$ means that we take expectation with respect to $X\sim P_{X|Z=z}$.
   Before we state the main results of this paper, we consider the following assumptions:
\begin{enumerate}
\item[(i)] {\em (Boundedness of Parameter Space)} The set $\calZ \subset\bbR^d$ is bounded and has positive Lebesgue measure in $\bbR^d$;
\item[(ii)]  {\em (Euclidean Approximation of Relative Entropy)} As $z'\to z$, the relation 
\begin{align}
&D(P_{X|Z=z} \| P_{X|Z=z'} )   \nn\\*
&= \frac{1}{2}\sum_{i,j} [J_z]_{i,j} (z_i-z_i')(z_j-z_j') + o( \| z-z'\|^2) \label{eqn:euc_re}
\end{align}
holds~\cite{Bor08,Abbe10,Ama00}.   We also assume compact convergence (i.e., uniform convergence on compact sets) for \eqref{eqn:euc_re}.
\item[(iii)]  {\em (Asymptotic Efficiency)} There exists a sequence of estimators $\hatz_n=\hatz_n(X^n)$ for the parameter $z$ such that 
\begin{equation}
\bbE_{z}\left[ D(P_{X| Z=\hatz_n} \| P_{X|Z=z}) \right] = \frac{d}{2n} + o\Big( \frac{1}{n} \Big). \label{eqn:div_expand}
\end{equation}
In other words, the estimator $\hatz_n$ asymptotically achieves the Cram\'er-Rao lower bound~\cite{Lehmann98,Ibragimov81} (i.e.,  $\bbE_z[ (\hatz_n-z)(\hatz_n-z)^T ] \to J_z^{-1}$), so the expectation of~\eqref{eqn:euc_re} with $z=z_n$ and $z'=z$   yields \eqref{eqn:div_expand}.
\item[(iv)]  {\em (Local Asymptotic Normality)}  Fix a point $z\in\calZ$ and let $\hatz_{\mathrm{ML}}(X^n)$ be the MLE of $z$ given $X^n$.  Define the function  $h_z(X^n) = \sqrt{n}J_z^{1/2} (\hatz_{\mathrm{ML}}(X^n) - z)$ and let $\phi^{(d)}(x): = (2\pi)^{-d/2} \exp(-\|x\|^2/2)$ be the $d$-dimensional standard Gaussian probability  density function. The {\em local asymptotic normality} condition  \cite{LeCam,vanderVaart,LeCam60} reads
\begin{equation}
\Big\|\phi^{(d)} - \Big( P_{X|Z=z+\frac{z'}{\sqrt{n}}}^n \cdot h_{z+ \frac{z'}{\sqrt{n}}}^{-1} \Big) \Big\|_1\to 0\label{eqn:local_asymp_norm}
\end{equation}
for any vector $z'\in\bbR^d$.
\item[(v)] {\em (Local Asymptotic Sufficiency)}  Let $Z$ be the random variable corresponding to the parameter $z$ and let $Y'$ be the corresponding MLE $\hatz_{\mathrm{ML}}(X^n)$. The {\em local  asymptotic sufficiency} condition  \cite{LeCam,vanderVaart,LeCam60} reads
\begin{equation}
\Big\|  \Big( P_{X^n |Y',Z=z}\cdot P_{Y'|Z=z+ \frac{z'}{\sqrt{n}}}  \Big)- P_{X |Z= z+ \frac{z'}{\sqrt{n}}}^n  \Big\|_1\to 0 \label{eqn:local_asymp_suff}
\end{equation}
for any vector $z'\in\bbR^d$.
\end{enumerate}
Conditions  (i)--(v) are  satisfied as long as  the parametrized  family  satisfies some weak smoothness condition~\cite{LeCam,vanderVaart,LeCam60}.
\begin{theorem} \label{thm:vis}
Assuming (i), (ii), (iv) and (v), the minimum compression rate  for visible codes under the variational distance error criterion  
\begin{align}
R_{\rmv}^{ (1) }(\delta_1)  &=\frac{d}{2} ,\quad  \forall\, \delta_1\in [0,2).\label{eqn:vis1}  
\end{align}
Assuming (i), (ii),  the minimum compression rate  for visible codes under the relative entropy error criterion  
\begin{align}
R_{\rmv}^{ (2) }(\delta_2)  &= \frac{d}{2}, \quad  \forall\, \delta_2\in [0,\infty)\label{eqn:vis2}  .
\end{align}
Assuming (i), (ii), (iv), and (v), the minimum compression rate   for blind codes under the variational distance error criterion  
\begin{align}
R_{\rmb}^{ (1) }(\delta_1') &=\frac{d}{2},\quad \forall\,\delta_1'\in [0,2).\label{eqn:blind1} 
\end{align}
Assuming (i), (ii), and (iii)    the minimum compression rate   for blind codes under the relative entropy error criterion  
\begin{align}
R_{\rmb}^{ (2) }(\delta_2')  & = \frac{d}{2}, \quad  \forall\, \delta_2'\in \Big[\frac{d}{2},\infty\Big). \label{eqn:blind2}
\end{align}
Furthermore, if (i) holds and $\{ P_{X|Z=z} \}_{z\in\calZ}$ is an exponential family~\cite{Wai08}, \eqref{eqn:blind2} can be strengthened to 
\begin{align}
R_{\rmb}^{ (2) }(\delta_2')   = \frac{d}{2} ,\quad\forall\, \delta_2'\in [0,\infty). \label{eqn:blind2_exp}
\end{align}
\end{theorem}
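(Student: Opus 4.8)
\emph{Proof strategy for \eqref{eqn:blind2_exp}.} The plan is to keep the converse of \eqref{eqn:blind2} and replace only its achievability part. Since every exponential family satisfies assumption~(ii) (the relative entropy is the Bregman divergence of the log-partition function) and assumption~(iii) (the MLE of the natural parameter is asymptotically efficient), the converse half of \eqref{eqn:blind2} already gives $R_{\rmb}^{(2)}(\delta_2')\ge\tfrac d2$ for all $\delta_2'\in[\tfrac d2,\infty)$, and the monotonicity \eqref{eqn:smaller_delta} extends this to $R_{\rmb}^{(2)}(\delta_2')\ge\tfrac d2$ for \emph{every} $\delta_2'\in[0,\infty)$ (alternatively, one invokes the strong converse of Lemma~\ref{lem:str}). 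By \eqref{eqn:smaller_delta} again, the whole statement then reduces to the single achievability claim $R_{\rmb}^{(2)}(0)\le\tfrac d2$, i.e.\ to constructing blind codes $\calC_{\rmb,n}=(f_{\rmb,n},\varphi_n)$ with $\varlimsup_n\frac{\log|\calC_{\rmb,n}|}{\log n}\le\tfrac d2$ and $\eps_{\rmb}^{(2)}(\calC_{\rmb,n})\to0$; this is the content of Lemma~\ref{lem:exp}. The reason the generic decoder used for \eqref{eqn:blind2}—which outputs the product law $P_{X|Z=\tilz}^n$ of a quantized MLE $\tilz$—cannot be pushed below $\delta_2'=\tfrac d2$ is that it incurs $\eps_{\rmb}^{(2)}\to\tfrac12\bbE[\chi_d^2]=\tfrac d2$ by the convexity bound on relative entropy; for an exponential family we can instead decode to the \emph{parameter-free} conditional law of $X^n$ given a quantized version of the sufficient statistic, losing only the sub-cell resolution.

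Write $P_{X|Z=z}(x)=h(x)\exp(\langle z,Y(x)\rangle-A(z))$, where $Y=\{Y_i\}$ is the natural statistic and $A$ the log-partition function, and let $\bar Y_n(X^n):=\frac1n\sum_{t=1}^n Y(X_t)$, which is a sufficient statistic and a smooth bijective image of the MLE through $\nabla A(\hatz_{\mathrm{ML}})=\bar Y_n$; crucially, $P_{X^n\mid\bar Y_n=m,\,Z=z}$ is independent of $z$. Put $\rho_n:=n^{-1/2}/\log n$, fix a bounded set $\calM\subset\bbR^d$ containing a fixed neighbourhood of $\nabla A(\calZ)$, and partition $\calM$ into cells $\calB_1,\dots,\calB_{M_n}$ of diameter at most $\rho_n$ (together with one extra ``garbage'' codeword for the event $\bar Y_n\notin\calM$), so that $M_n=\Theta\big((\sqrt n\,\log n)^d\big)$ and hence $\log M_n/\log n\to\tfrac d2$. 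The encoder is $f_{\rmb,n}(x^n)=j$ iff $\bar Y_n(x^n)\in\calB_j$. Writing $m_j$ for the centre of $\calB_j$ and $z_j:=(\nabla A)^{-1}(m_j)$, the decoder outputs the conditional distribution $\varphi_n(j):=P^n_{X^n\mid\bar Y_n\in\calB_j,\,Z=z_j}$ (and $P^n_{X|Z=z_0}$ for a fixed $z_0\in\calZ$ on the garbage codeword).

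Because $\varphi_n(j)$ and $P^n_{X|Z=z}$, once further conditioned on the value of $\bar Y_n$ inside $\calB_j$, coincide with the same parameter-free kernel, the chain rule for relative entropy collapses the error onto the $\bar Y_n$-marginals and splits over cells:
\begin{align*}
&D\big(P_{X|Z=z}^n\,\big\|\,\varphi_n\cdot f_{\rmb,n}\cdot P_{X|Z=z}^n\big)\\
&\qquad=\sum_j\Pr[\bar Y_n\in\calB_j\mid z]\;D\big(P_{\bar Y_n\mid\calB_j,z}\,\big\|\,P_{\bar Y_n\mid\calB_j,z_j}\big).
\end{align*}
Inside $\calB_j$ the two laws $P_{\bar Y_n\mid\calB_j,z}$ and $P_{\bar Y_n\mid\calB_j,z_j}$ are members of the exponential family $m\mapsto g_n(m)e^{\langle\theta,m\rangle}\bone\{m\in\calB_j\}$, where the $z$-independent base $g_n$ (obtained by integrating $h$ over the fibre $\{\bar Y_n=m\}$) cancels in the ratio, with natural parameters $nz$ and $nz_j$; hence the inner relative entropy is a Bregman divergence, and a second-order Taylor expansion, using that $\bar Y_n$ ranges there over a set of diameter at most $\rho_n$, bounds it by $\tfrac12(n\|z-z_j\|\rho_n)^2$ for every $j$. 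For a cell ``relevant'' to $z$—one with $\|m_j-\nabla A(z)\|\le C\sqrt{(\log n)/n}$ for a suitable constant $C$—Lipschitzness of $(\nabla A)^{-1}$ on $\calM$ gives $\|z-z_j\|=O(\sqrt{(\log n)/n})$, so the bound is $O(1/\log n)$ and these cells contribute $o(1)$ to the sum; for every other cell, boundedness of $\calZ$ (assumption~(i)) leaves only the crude bound $O(n/(\log n)^2)$, but a Hoeffding/Chernoff estimate (uniform in $z\in\calZ$) shows that the $P^n_{X|Z=z}$-probability of landing in a non-relevant cell is at most $2d\,n^{-c}$ with $c>1$ once $C$ is large enough, contributing $O(n^{1-c}/(\log n)^2)\to0$, while the garbage codeword contributes $e^{-\Omega(n)}\cdot O(n)\to 0$. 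Hence the displayed error tends to $0$ uniformly over $z\in\calZ$, and integrating against $\mu$ gives $\eps_{\rmb}^{(2)}(\calC_{\rmb,n})\to0$, establishing $R_{\rmb}^{(2)}(0)\le\tfrac d2$ and completing \eqref{eqn:blind2_exp}.

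The main obstacle I expect is exactly the design of the decoder together with the bookkeeping that forces a sum of many individually small relative entropies to vanish. The decoder must simultaneously (a) exploit the $z$-freeness of $P_{X^n\mid\bar Y_n=m}$, so that the error reduces to a cell-by-cell comparison of $\bar Y_n$-marginals, and (b) be ``re-centred'' at $z_j=(\nabla A)^{-1}(m_j)$, so that within a cell the governing log-likelihood ratio is $n\langle z-z_j,\,m-m_j\rangle=o(1)$ for the relevant $z$; the naive reference (the uniform-within-cell law, i.e.\ $z_j\equiv0$) fails because that ratio then oscillates by $e^{\Theta(\sqrt n)}$ across a single cell, which is why the cell diameter $\rho_n$ must be taken slightly below $n^{-1/2}$ and paired with this tilt. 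One must also pin down the exponential-family regularity tacitly used—that $(\nabla A)^{-1}$ is Lipschitz and $\nabla^2 A=J_z$ is bounded away from $0$ and $\infty$ on $\mathrm{cl}(\calZ)$, and that $Y$ is sub-Gaussian there—which is the same mild smoothness the paper already assumes; when $\calX$ is finite (the $k$-nomial setting of the introduction) all of this is automatic, $\calM$ may be taken to be the probability simplex, and the garbage codeword is absent.
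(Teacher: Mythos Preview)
Your approach is correct in substance but takes a genuinely different route from the paper, with two minor slips to fix. First, the paper's error is $D(\varphi_n\cdot f_{\rmb,n}\cdot P_{X|Z=z}^n\,\|\,P_{X|Z=z}^n)$, not the reverse as you write; fortunately your Bregman/Taylor bound on each cell is direction-insensitive (it controls the Hessian of the restricted log-partition, which is $\mathrm{Cov}(m)\preceq\rho_n^2 I$ on $\calB_j$ regardless of which argument is held fixed), so the conclusion survives once you flip the display. Second, outputting the full-support law $P^n_{X|Z=z_0}$ on the garbage codeword breaks the disjoint-support decomposition you invoke; output instead $P^n_{X^n\mid\bar Y_n\notin\calM,\,Z=z_0}$ and the $e^{-\Omega(n)}\cdot O(n)$ estimate goes through.

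The paper's Lemma~\ref{lem:exp} proceeds differently: it quantizes $\hat\eta=\bar Y_n$ at span $t/\sqrt n$ with $t$ \emph{fixed}, decodes $\bar\eta$ to the \emph{uniform} mixture $|\beta_t^{-1}(\bar\eta)|^{-1}\sum_{\eta\in\beta_t^{-1}(\bar\eta)}P_{X^n\mid Y^{(n)}=n\eta}$ of parameter-free conditionals over the cell, and then uses sufficiency plus the CLT to reduce the error to $\sup_{\alpha}D(\phi_{t,\alpha}\|\phi)$---the divergence between a step-quantized standard Gaussian and the Gaussian itself---which vanishes as $t\to 0$. You instead quantize strictly finer ($\rho_n=n^{-1/2}/\log n$, so no outer limit in $t$ is needed), decode to a single $z_j$-\emph{tilted} conditional on the cell, and control the error by a direct second-order Bregman bound together with a relevant/irrelevant tail split via Hoeffding. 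Your route avoids the CLT and produces one explicit code sequence, at the price of extra uniform regularity (sub-Gaussian $Y$ and strong convexity of $A$ on $\mathrm{cl}(\calZ)$) that the paper's limiting argument does not need to make explicit; both schemes rest on the same structural fact that $P_{X^n\mid\bar Y_n=m}$ is $z$-free, which collapses the relative-entropy error onto the $\bar Y_n$-marginal within each cell.
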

 The direct and converse parts of this theorem are proved in Sections \ref{sec:direct} and \ref{sec:conv} respectively. Remarks on and implications of the theorem are detailed in the following section.

\section{Connection to Sufficient Statistics and Exponential Families} \label{sec:suff_stat}
In this section, we discuss the implications of Theorem \ref{thm:vis} in greater detail by relating them to the notion of (exact) sufficient statistics~\cite[Sec.~2.9]{Cov06}. We first review the fundamentals of sufficient statistics, then motivate the notion of approximate sufficient statistics, provide some background on exponential families, and finally show that if one stores the exact sufficient statistics in the memory $\calY_n$, the memory size would be larger than that prescribed by Theorem \ref{thm:vis}.
\subsection{Review of Sufficient Statistics} \label{sec:ss}
 Suppose, for the moment, that the blind encoder $f_{\rmb,n}$ is a deterministic function. When $Y  = f_{\rmb,n}(X^n)$ is a {\em sufficient statistic} relative to the family $\{ P_{X|Z=z}\}_{z\in\calZ}$ \cite[Sec.~2.9]{Cov06},  the conditional distribution $P_{X|Z=z, Y=y}^n(x^n)$ does not depend on $z$, i.e., $Z\markov Y\markov X$
%\begin{equation}
%Z-Y-X  \label{eqn:mc}
%\end{equation}
forms a Markov chain in this order. In this case, we can choose the decoder $\varphi_n:\calY_n\to\calP(\calX^n)$  as follows
\begin{equation}
\varphi_n(y) := P_{X|Z=z, Y=y}^n   . \label{eqn:dec_ss}
\end{equation}
 Now, noting that $P_{X|Z=z}^n ( \{x^n: f_{\rmb,n}(x^n) =y\}) = f_{\rmb,n} \cdot P_{X|Z=z}^n(y)$ for every $y$ in the memory $\calY_n$, we have 
 \begin{align}
&\varphi_n\cdot f_{\rmb,n}    =\sum_{y}  f_{\rmb,n}\cdot P_{X|Z=z}^n(y)\varphi_n(y)\\
 &\quad=\sum_{y}  P_{X|Z=z}^n ( \{x^n: f_{\rmb,n}(x^n) =y\})  P_{X|Z=z, Y=y}^n\\
 &\quad=P_{X|Z=z}^n. \label{eqn:dec_ss1}
 \end{align}
Observe that, in this case,  regardless of which error metric we choose, we will attain zero error between the reconstructed distribution  $\varphi_n\cdot f_{\rmb,n}$ and the original one $P_{X|Z=z}^n$. However, as we will see, if $P_{X|Z=z}$ is an exponential family the memory size exceeds that prescribed by the various statements in Theorem \ref{thm:vis}. Thus it is natural to relax the stringent condition in~\eqref{eqn:dec_ss} to some approximate versions of it.
  
 \subsection{Exact vs.\ Approximate Sufficient Statistics}
To consider an approximate version of \eqref{eqn:dec_ss}, we will make an  assumption on the error function $F$ % (in addition to the three conditions in Section \ref{sec:error_crit}. 
\begin{enumerate}
\item[(*)] Consider distributions $P_i$ and $P_i'$ such that they respectively have disjoint supports from $P_j, j \ne i$ and $P_j' , j \ne i$. Then we assume that 
\begin{equation}
F\bigg( \sum_i p_i P_i ,\sum_i p_i P_i' \bigg)=\sum_i p_i F(  P_i , P_i'  ) \label{eqn:cond4}
\end{equation}
where $\{p_i\}$ forms a probability mass function. This is clearly satisfied for the variational distance error function. 
\end{enumerate}
Then for $\delta\ge 0$, we have
\begin{align}
&\eps_\rmb(f_{\rmb,n},\varphi_n) \nn\\*
&= \int_\calZ F( \varphi_n \cdot f_{\rmb,n} \cdot P_{X|Z=z}^n, P_{X|Z=z}^n)\, \mu(\rmd z)\\
&= \int_\calZ F\bigg(\sum_y (f_{\rmb,n} \cdot P_{X|Z=z}^n )(y) \varphi_n(y) , \nn\\*
&\qquad \sum_y ( f_{\rmb,n} \cdot P_{X|Z=z}^n) (y) P_{X|Z=z,Y=y}^n  \bigg)    \, \mu(\rmd z)\\
&= \int_\calZ \sum_y( f_{\rmb,n} \cdot P_{X|Z=z}^n )(y) \nn\\*
&\qquad\times  F\big(  \varphi_n(y) ,  P_{X|Z=z,Y=y}^n   \big)    \, \mu(\rmd z)   \label{eqn:use_cond4}\\
&\le\delta,\label{eqn:err}
\end{align}
where \eqref{eqn:use_cond4} uses Condition (*) of $F$ in \eqref{eqn:cond4}, and \eqref{eqn:err} uses the fact that the error is bounded by $\delta$ according to  \eqref{eqn:Rb} and \eqref{eqn:Rv}. If $\delta=0$, the equality in \eqref{eqn:dec_ss} holds, and we revert to the usual notion of exact sufficient statistics discussed in Section~\ref{sec:ss}. Hence, the codes that we consider allowing for errors can be regarded as a generalization of sufficient statistics.

\subsection{Background on Exponential Families} \label{sec:exp_fam}
To put our results in Theorem \ref{thm:vis} into context, we regard $\{ P_{X|Z=z} \}_{z\in\calZ}$ as an exponential family~\cite{Wai08} with parameter space $\calZ\subset\bbR^d$. Recall that a  parametric family of distributions $\{ P_{X|Z=z} \}_{z\in\calZ}$ is called an {\em exponential family}  if it takes the form
\begin{equation}
P_{X|Z=z}(x) = P_X(x) \exp\left[  \sum_{i=1}^m  z_i Y_i(x)-A(z) \right], \label{eqn:exp_fam}
\end{equation}
where $A(z)$, the cumulant generating function    of the random vector $(Y_1(X),\ldots, Y_m(X))$,  is defined as
\begin{equation}
A(z) := \log\sum_x P_X(x) \exp\left[ \sum_{i=1}^m z_i Y_i(x)\right].\label{eqn:log_par}
\end{equation}
%In fact, the   exponential family written  in \eqref{eqn:exp_fam} is known as a {\em natural} exponential  family because it is parametrized by its natural parameter $z$. 
The functions $Y_i(x)$ are known as the {\em sufficient statistics} of the exponential  family. Another fact that we exploit in the sequel is that for any exponential family, there is an alternative parametrization known as the {\em moment parametrization} \cite{Wai08}. There is a one-to-one correspondence between the natural  parameter $z$ and the expectation parameter
\begin{equation}
\eta_i(z) := \frac{\partial A(z)}{\partial z_i}=\bbE_z[Y_i]=\sum_x P_{X|Z=z}(x) Y_i(x).
\end{equation}
Hence, in the following to estimate the natural parameter $z$, we can first estimate the moments $\eta(z) = (\eta_1(z),\ldots, \eta_m(z))\in \calH:=\{ \eta(z): z\in\calZ\}$ and then use the  one-to-one correspondence to obtain $z$. 

\subsection{An Example: $k$-nomial Distributions} \label{sec:knomial}
Now as a concrete example, we consider a $k$-nomial distribution, i.e., the family of discrete distributions that take on $k\in\bbN$ values. The set of $k$-nomial distributions forms an exponential family with sufficient statistics
\begin{align}
Y_i(x) = \left\{ \begin{array}{cc}
1 & x=i \\
-1 & x=i+1 \\
0 & \mathrm{else}
\end{array} \right. ,\quad i = 1, \ldots, k-1 . \label{eqn:Yi}
\end{align}
Note that there are other parametrizations. It is known that the vector  $Y(x):=(Y_1(x),\ldots,Y_{k-1}(x))$ allows us to recover information about the unknown parameter $z$ \cite{Ama00,Ama01}, i.e., $Y(x)$ is a sufficient statistic for the $k$-nomial distribution.  

Given $n$ i.i.d.\ data samples from $P_{X|Z=z}^n$, the exponential family can be written as 
\begin{equation}
 P_{X|Z=z}^n(x^n)\! =\! P_X^n(x^n)\exp\left[ \sum_{i=1}^m Y_i^{(n)}(x^n) z_i \!-\! n A(z) \right]\! ,\! 
\end{equation}
where $x^n=(x_1,\ldots, x_n)\in\calX^n$ and $Y_i^{(n)}(x^n) := \sum_{j=1}^n Y_i(x_j)$. Thus the vector of sufficient statistics is $Y^{(n)}(x^n):= (Y_1^{(n)}(x^n),\ldots, Y_m^{(n)}(x^n))$.  In the $k$-nomial case, the dimension of the exponential family $m =k-1$. It is easy to see that the total number of possibilities of $Y^{(n)}(x^n)$, i.e., the size of the set $\{ Y^{(n)}(x^n) : x^n\in\calX^n\}$ is $\binom{ n+k-1}{k-1}$. This is also the total number of $n$-types \cite{Csi97} on an alphabet of size $k$. In this case, the required memory size is 
\begin{equation}
 \log |\calY_n| = \log \binom{ n+k-1}{k-1}= (k-1)\log n + O(1).
 \end{equation} 
Note  that $k-1=d$,  the dimension of the parameter space $\calZ$. Thus,  the pre-log coefficient is $d$, which is twice as large as what the results of Theorem \ref{thm:vis} prescribe if we use approximate sufficient statistics  in the sense of \eqref{eqn:err} instead of exact sufficient statistics discussed in Section \ref{sec:ss}. This motivates us to study the fundamental limits of approximate sufficient statistics in the large $n$ limit to   reduce the memory size $|\calY_n|$ from   $n^{d+o(1)}$ to   $n^{\frac{d}{2}+o(1)}$.
\section{Proofs of Direct Parts of Theorem \ref{thm:vis}} \label{sec:direct}
In this section, the direct parts (upper bounds) of Theorem~\ref{thm:vis} will be proved. For logical reasons, the statements in Theorem~\ref{thm:vis} will not be proved sequentially. Rather we will present the simplest proofs before proceeding to the proofs for more general statements.    First, in Section~\ref{sec:mdl}, we will prove semi-direct part   for the relative entropy criterion in~\eqref{eqn:blind2}. This immediately leads the proof of the direct part for~\eqref{eqn:vis2}.  Next, in Section~\ref{sec:exp_prf}, we strengthen the direct part for exponential families under the relative entropy criterion in~\eqref{eqn:blind2_exp}. Finally, in Section~\ref{sec:gen},  we prove the  direct part  in the blind setting under the variational distance criterion as in \eqref{eqn:blind1}. 

%we will prove the direct part  for error criterion being the variational distance criterion in~\eqref{eqn:vis1} and~\eqref{eqn:blind1}.
\subsection{Semi-Direct Part Based On Rissanen's Minimum Description Length (MDL) Encoder } \label{sec:mdl}

Here we prove the direct parts for \eqref{eqn:vis2} and \eqref{eqn:blind2} where the error criterion used   is the relative entropy. We present a complete achievability proof in the  visible setting, i.e., \eqref{eqn:vis2}.
Notice that \eqref{eqn:vis2} implies \eqref{eqn:vis1}.
Under the same error criterion, we show  a semi-achievability in the  blind setting  (i.e., \eqref{eqn:blind2}) in which
the error does not vanish even in the limit of large~$n$.

\begin{lemma} \label{lem:mdl}
Assuming (i), (ii), we have 
\begin{align}
R_\rmv^{(2)}(0) \le\frac{d}{2}. \label{eqn:direct_v2}
\end{align}
In addition, assuming (i), (ii), and (iii),
\begin{align}
R_\rmb^{(2)}\Big(\frac{d}{2}\Big) \le\frac{d}{2}.\label{eqn:direct_b2}
\end{align}
\end{lemma}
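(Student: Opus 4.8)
\textbf{Proof proposal for Lemma~\ref{lem:mdl}.}

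The plan is to implement Rissanen's MDL idea in its simplest form: quantize the $d$‑dimensional parameter set at a scale slightly finer than $n^{-1/2}$, store (the index of) the nearest grid point, and let the decoder output the $n$‑fold product distribution at that grid point. Concretely, since $\calZ$ is bounded, fix a compact convex $\calK\supseteq\overline{\calZ}$ on which the family and the expansion \eqref{eqn:euc_re} extend, and for each $n$ let $G_n\subset\calK$ be a cubic grid of spacing $\theta_n:=(\sqrt{n}\log n)^{-1}$; then $|G_n|=O(\theta_n^{-d})$, so $\log|G_n| = \tfrac{d}{2}\log n + d\log\log n + O(1)$ and $\varlimsup_n \frac{\log|G_n|}{\log n}=\tfrac d2$. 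This is the claimed rate for both parts, and it remains only to exhibit decoders with the right error.

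For the visible bound \eqref{eqn:direct_v2}: let the encoder send the index of the grid point $q_n(z)\in G_n$ nearest to $z$, and let the decoder output $\varphi_n(\cdot)=P^n_{X|Z=q_n(z)}$. The reconstruction is then exactly $P^n_{X|Z=q_n(z)}$, so the error at $z$ is $nD\big(P_{X|Z=q_n(z)}\|P_{X|Z=z}\big)$; by assumption~(ii) together with its compact‑convergence strengthening, this is at most $(1+o(1))\tfrac n2\,(q_n(z)-z)^\top J_{q_n(z)}(q_n(z)-z)\le C\,n\theta_n^2 = C/(\log n)^2$ uniformly in $z\in\calZ$, hence $\eps_\rmv^{(2)}(f_{\rmv,n},\varphi_n)\to 0$. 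This proves \eqref{eqn:direct_v2} (and, via Pinsker, \eqref{eqn:vis1}).

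For the blind bound \eqref{eqn:direct_b2}: take an asymptotically efficient estimator $\hat z_n=\hat z_n(X^n)$ as in~(iii) (chosen, under the smoothness conditions in force, to also obey the moderate‑deviation bound $\Pr_z\{\|\hat z_n-z\|>r\}=o(1/n)$ for each fixed $r>0$; cf.~\cite{Ibragimov81}), let the encoder send the index of the grid point $\hat z_n'\in G_n$ nearest to the projection of $\hat z_n$ onto $\calK$, and keep the same plug‑in decoder. Since $\hat z_n'$ is a deterministic function of $X^n$, the reconstructed distribution is the mixture $\sum_{q\in G_n}\Pr_z\{\hat z_n'=q\}\,P^n_{X|Z=q}$, and convexity of $D(\cdot\|Q)$ plus additivity of relative entropy over products give
\begin{align}
\eps_\rmb^{(2)}(f_{\rmb,n},\varphi_n)\le \int_\calZ n\,\bbE_z\big[D\big(P_{X|Z=\hat z_n'}\big\|P_{X|Z=z}\big)\big]\,\mu(\rmd z). \nn
\end{align}
So it suffices to show the right‑hand side has $\varlimsup_n\le\tfrac d2$, i.e.\ that the quantized, clipped estimator $\hat z_n'$ is still asymptotically efficient. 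On $\{\|\hat z_n-z\|\le r\}$ we have $\|\hat z_n'-\hat z_n\|\le\sqrt d\,\theta_n=o(1/\sqrt n)$, and expanding $\hat z_n'-z=(\hat z_n-z)+\Delta_n$ in \eqref{eqn:euc_re} — using continuity and positive‑definiteness of $z\mapsto J_z$ on $\calK$ and $n\|\Delta_n\|^2\to 0$ — yields $D(P_{X|Z=\hat z_n'}\|P_{X|Z=z})\le(1+o(1))D(P_{X|Z=\hat z_n}\|P_{X|Z=z})+o(1/n)$; by~(iii) this part contributes at most $(1+o(1))(\tfrac d2+o(1))$. On the complement, $D(P_{X|Z=\hat z_n'}\|P_{X|Z=z})\le\sup_{u,v\in\calK}D(P_{X|Z=u}\|P_{X|Z=v})<\infty$, so it contributes at most $n\cdot O(1)\cdot o(1/n)\to 0$. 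Letting $r\downarrow 0$ gives $\varlimsup_n\eps_\rmb^{(2)}\le\tfrac d2$, proving \eqref{eqn:direct_b2}.

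The main obstacle is exactly this last reduction: because \eqref{eqn:euc_re} is only a \emph{local} statement, passing from the efficient estimator to its $o(1/\sqrt n)$‑quantization must be controlled \emph{in expectation}, which forces one to bound the rare event that $\hat z_n$ lands far from the true parameter. Handling that event cleanly needs the tail estimate $\Pr_z\{\|\hat z_n-z\|>r\}=o(1/n)$, which holds for the MLE under the regularity conditions assumed throughout but is strictly stronger than the bare statement of~(iii); everything else (the grid count, the convexity step, and the entire visible case) is routine.
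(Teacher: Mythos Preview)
Your visible argument is essentially the paper's own (quantize $z$, output the product law); you simply let the span shrink with $n$ rather than keep it fixed, which is a harmless variant. The blind argument, however, differs structurally from the paper's. The paper keeps a \emph{fixed} lattice span $t/\sqrt{n}$ and applies the elementary inequality $\|a-b\|_{J_z}^2\le(1+r)\|a\|_{J_z}^2+(1+\tfrac1r)\|b\|_{J_z}^2$ with $a=\hatz_n-z$ and $b=\hatz_n-z_{n,t}$. This separates the relative-entropy error into (a) an estimation piece $\tfrac{n(1+r)}{2}(\hatz_n-z)^\top J_z(\hatz_n-z)$, whose expectation is handled directly by assumption~(iii) and yields $\tfrac{(1+r)d}{2}$, and (b) a quantization piece that is \emph{deterministically} bounded by $C(1+\tfrac1r)t^2$ because $|z_{n,t,i}-\hatz_{n,i}|\le t/\sqrt{n}$ for every sample point. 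Sending $t\to 0$ and then $r\to 0$ after $n\to\infty$ gives the $\tfrac d2$ bound without ever invoking tail control on $\hatz_n$.

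Your route instead makes the grid $o(1/\sqrt{n})$ so that quantization becomes asymptotically negligible and then argues that $\hatz_n'$ is still asymptotically efficient. That is valid, but because the quadratic expansion~\eqref{eqn:euc_re} is only local, you are forced to bound $n\,\bbE_z[D]$ on the rare event $\{\|\hatz_n-z\|>r\}$, and this is precisely where you import the moderate-deviations input $\Pr_z\{\|\hatz_n-z\|>r\}=o(1/n)$. As you correctly flag, that tail estimate is strictly stronger than assumption~(iii). So while your proof is sound under the additional hypothesis, the paper's parallelogram decomposition buys exactly the ability to stay within (i)--(iii): the quantization term needs no probability at all, and the estimation term is literally the quantity appearing in~(iii). (A minor point: your displayed inequality $D(P_{X|Z=\hatz_n'}\|P_{X|Z=z})\le(1+o(1))D(P_{X|Z=\hatz_n}\|P_{X|Z=z})+o(1/n)$ hides a cross term of size $O(\|\hatz_n-z\|\theta_n)$, which is not $o(1/n)$ pointwise for fixed $r$; it becomes harmless only after taking expectations, or after a $(1+s_n)$-splitting with $s_n\to 0$ slowly---i.e., you are implicitly using the same inequality the paper makes explicit.)
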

Note that \eqref{eqn:direct_v2} proves that  the direct part of~\eqref{eqn:vis2} holds because it implies that $R_\rmv^{(2)}(\delta_2) \le \frac{d}{2}$ for all $\delta_2\in [0,\infty)$. Similarly, \eqref{eqn:direct_b2} proves that~\eqref{eqn:blind2} holds because it implies that  $R_\rmb^{(2)}(\delta_2') \le \frac{d}{2}$ for all $\delta_2'\in [\frac{d}{2},\infty)$.  These statements follow immediately from the bound in \eqref{eqn:smaller_delta} concerning the monotonicity of $\delta\mapsto R_\rmv^{(2)}(\delta )$ and  $\delta\mapsto R_\rmb^{(2)}(\delta )$. 

We also   note that \eqref{eqn:direct_b2}, which follows from Rissanen's ideas~\cite{Rissanen83,Rissanen84}, is rather weak because the asymptotic error is bounded above by $\frac{d}{2}$ instead  of  $0$.  We improve on this severe  limitation in the subsequent subsections.%assuming that $\{P_{X|Z=z}\}_{z\in\calZ}$ is an exponential family in Section \ref{sec:exp_prf}.

\begin{proof}[Proof of Lemma \ref{lem:mdl}]
We first prove \eqref{eqn:direct_b2}. Then we describe how to modify the argument slightly to show~\eqref{eqn:direct_v2}. Fix a lattice span $t>0$ and consider the subset $\calZ_{n,t} := \frac{t}{\sqrt{n}}\bbZ^d\cap \calZ\subset\calZ$. Given the MLE $\hatz_n: =\hatz_{\mathrm{ML}}(X^n)$, we consider the closest point  
\begin{equation}
z_{n,t}(\hatz_n):=\argmin_{z' \in \calZ_{n,t} } \sum_{i,j } [J_z]_{i,j} (\hatz_{n,i}-z_i')(\hatz_{n,j}-z_j'). \label{eqn:approx_grid}
\end{equation}
That is, for this blind encoder, the memory $\calY_n$ is  taken to be $\calZ_{n,t} $ and the encoder is $f_{\rmb,n}(X^n):= z_{n,t}(\hatz_{\mathrm{ML}}(X^n))$, i.e., we first compute the  MLE then we approximate it with a point in a finite subset $\calZ_{n,t}$ using the formula in  \eqref{eqn:approx_grid}.  The decoder is the map from the parameter $z_{n,t}$ to the distribution $P_{X|Z=z_{n,t}}^n$. The coding length is $\log |\calZ_{n,t}|= \frac{d}{2}\log n + O(1)$, where the dependence on $t$ is in the $O(1)$ term. Note that Rissanen~\cite{Rissanen83, Rissanen84} essentially proposed the same encoder but he was considering a different problem of universal source coding. Also Rissanen did not explicitly specify the decoder. We also mention that Merhav and Feder~\cite{merhav95} extended Rissanen's analysis to both the minimax and Bayesian (maximin) formulations. 

Now,  for any $r>0$ and any norm $\|\cdot\|$, we have the inequality
$\|a-b\|^2 \le (1+r)\|a\|^2  + (1+\frac{1}{r})\|b\|^2$, a consequence of the basic fact that $\big\|\sqrt{r}a-\sqrt{\frac{1}{r} }b\, \big\|^2\ge 0$. 
Applying this inequality to the norm $\frac{1}{2}\| \cdot \|_{J_z}$ ($J_z$ is positive definite) with $a\equiv\hatz_n - z$ and $b\equiv  \hatz_n-z_{n,t}$, we obtain
\begin{align}
&\frac{1}{2}\sum_{i,j} [ J_z ]_{i,j} (z_{n,t,i}- z_i)(z_{n,t,j}- z_j) \nn\\*
&\quad\le \frac{1+r}{2} \sum_{i,j} [ J_z ]_{i,j} (\hatz_{n,i}- z_i)(\hatz_{n,j}- z_j) \nn\\*
&\qquad + \frac{1+\frac{1}{r}}{2} \sum_{i,j} [ J_z ]_{i,j} (z_{n,t,i}- \hatz_{n,i})(z_{n,t,j}- \hatz_{n,j}). \label{eqn:cs_ineq}
\end{align}
We now estimate the error as follows:
\begin{align}
& \eps_{\rmb}^{(2)} ( f_{\rmb,n}, \varphi_n)  \nn\\*
 & := \int_{\calZ } D \left( \bbE_z[ P_{X |Z= z_{n,t}}^n ] \,\big\|\, P_{X|Z=z}^n\right)\, \mu(\rmd z) \label{eqn:expectat} \\
& \le \int_\calZ \bbE_z \left[ D \left( P_{X |Z= z_{n,t}}^n \,\big\|\,  P_{X|Z=z}^n  \right) \right]\, \mu(\rmd z) \label{eqn:use_jens}\\
& = \int_\calZ n\bbE_z \bigg[ \frac{1}{2}\sum_{i,j} [J_z]_{i,j} (z_{n,t,i}- z_i)(z_{n,t,j}- z_j)\nn\\*
&\quad+ o( \| z_{n,t}-z\|^2)  \bigg] \, \mu(\rmd z)\label{eqn:use_div_expansion} \\
&\le \int_\calZ\bbE_z \bigg[ \frac{n (1+r)}{2} \sum_{i,j} [ J_z ]_{i,j} (\hatz_{n,i}- z_i)(\hatz_{n,j}- z_j) \nn\\*
&\quad + \frac{n(1+\frac{1}{r})}{2} \sum_{i,j} [ J_z ]_{i,j} (z_{n,t,i}- \hatz_{n,i})(z_{n,t,j}- \hatz_{n,j}) \nn\\*
&\quad+ o( \| z_{n,t}-z\|^2)  \bigg]\, \mu(\rmd z)\label{eqn:use_cs_ineq}\\
&=   \frac{1  +  r }{2}   \int_{\calZ} d\, \mu(\rmd z) +o(1) \nn\\*
&\quad+ \frac{ n(1+ \frac{1}{r}) }{2} \int_\calZ\bbE_z \bigg[   \sum_{i,j} [ J_z ]_{i,j} (z_{n,t,i}- \hatz_{n,i})(z_{n,t,j}- \hatz_{n,j}) \nn\\*
&\quad   +   o\big(  {\| z_{n,t}-z\|^2}  \big)  \, \mu(\rmd z) \bigg] \label{eqn:use_def_lattice} \\
&\le   \frac{1  +  r }{2}   \int_{\calZ} d\, \mu(\rmd z) +o(1) + \frac{   1+ \frac{1}{r}   }{2}    \sum_{i\ge j} |[ J_z ]_{i,j}|t^2\nn\\*
&\quad+  \int_\calZ\bbE_z \left[  o\big( n {\| z_{n,t}-z\|^2}  \big) \right]  \, \mu(\rmd z)  \label{eqn:use_def_lattice2} \\ 
  &= \frac{(1+r )d}{2}  + \frac{ 1+ \frac{1}{r} }{2} \sum_{i\ge j } | [J_z]_{i,j} | t^2  + o(1) .\label{eqn:use_dim}
\end{align}
We now justify  some of the steps above. In  \eqref{eqn:expectat}, the expectation is over the random $z_{n,t}(\hatz_{\mathrm{ML}}(X^n))$   where $X^n\sim P_{X|Z=z}^n$; in~\eqref{eqn:use_jens} we used Jensen's inequality and the convexity of the relative entropy; in \eqref{eqn:use_div_expansion} we used the Euclidean approximation of the relative entropy in~\eqref{eqn:euc_re} (Assumption (ii)); in \eqref{eqn:use_cs_ineq} we used the inequality in \eqref{eqn:cs_ineq};   in~\eqref{eqn:use_def_lattice}  we used~\eqref{eqn:div_expand}  (Assumption (iii)); and   in \eqref{eqn:use_def_lattice2} and \eqref{eqn:use_dim} we used the definition of the lattice $\calZ_{n,t}$ resulting in the bound $|z_{n,t,i}- \hatz_{n,i}|\le  \frac{t }{\sqrt{n}}$ for all~$i$ and~$n$.

Now since $n\in\bbN$ is arbitrary,% we may take $n\to\infty$ to obtain
\begin{equation}
\varlimsup_{n\to\infty}\eps_{\rmb}^{(2)} ( f_{\rmb,n}, \varphi_n)\le\frac{1+r}{2} d +  \frac{1+\frac{1}{r}}{2} \sum_{i\ge j } | [J_z]_{i,j} | t^2   \label{eqn:asymp_error_b2}
\end{equation}
Since $t>0$ is arbitrary, we may take $t\to 0$ so the second term vanishes. Next, since $r>0$ is arbitrary, we may take $r\to 0 $ so the first term converges to the asymptotic error   bound of $\frac{d}{2}$. This proves the upper bound in \eqref{eqn:direct_b2}. % \red{where did we use Assumption (3) here?} 

We now consider visible case, which is simpler. In this case, we can replace the MLE $\hatz_n$ by $z$, since the encoder has direct access to the parameter $z$. Hence, the first terms in~\eqref{eqn:use_cs_ineq}--\eqref{eqn:asymp_error_b2} are equal  $0$ and we obtain~\eqref{eqn:direct_v2} as desired.\end{proof}
\subsection{Direct Part For Exponential Families } \label{sec:exp_prf}
%In this subsection, we prove the direct part for \eqref{eqn:blind2_exp}.  Here the error criterion is again the relative entropy. % This improves on the result in \eqref{eqn:direct_b2} in Lemma \ref{lem:mdl} because we can design a blind code in which the error vanishes instead of being upper bounded by $d/2$.   In other words, the blind code presented below for exponential families can realize the same error   performance  (which is  asymptotically zero) as the visible code  presented at the end of the proof of Lemma \ref{lem:mdl}.

In the blind setting, the MDL encoder discussed in Section~\ref{sec:mdl} has a non-vanishing error even in the asymptotic limit.  
To overcome this problem, we devise a novel method attaining zero error in the asymptotic limit.  
Since the method is more complicated in the general setting,  and requires more assumptions (see Section \ref{sec:gen}), 
we first assume that the distribution family forms an exponential family,
and prove the direct part under the relative entropy criterion.

\begin{lemma} \label{lem:exp}
When $\{P_{X|Z=z}\}_{z\in\calZ}$ is an exponential family and Assumption (i) holds, we have 
\begin{align}
R_\rmb^{(2)}(0)  \le\frac{d}{2}.
\end{align}
\end{lemma}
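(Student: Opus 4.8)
The plan is to exploit the special structure of exponential families: the vector of sufficient statistics $Y^{(n)}(x^n)=(Y_1^{(n)},\dots,Y_m^{(n)})$ is an exact sufficient statistic, so conditioning on it kills the $z$-dependence. The obstacle with the MDL encoder of Lemma~\ref{lem:mdl} is that rounding $\hatz_n$ to a lattice point $z_{n,t}$ and then decoding to $P^n_{X|Z=z_{n,t}}$ incurs the unavoidable bias term $\tfrac{(1+r)d}{2}$, which comes from replacing the true distribution by a member of the family at a perturbed parameter. The idea is to instead decode to a mixture that is genuinely anchored at the observed data. Concretely, I would let the encoder compute the MLE $\hatz_n$ (equivalently, the empirical moments $\hat\eta_n = \tfrac1n Y^{(n)}(X^n)$), round it to the nearest point $z_{n,t}$ of the lattice $\calZ_{n,t} = \tfrac{t}{\sqrt n}\bbZ^d \cap \calZ$ (so $\log|\calY_n| = \tfrac{d}{2}\log n + O(1)$ as before), but then choose the decoder so that $\varphi_n(z_{n,t})$ is a \emph{local mixture} of the family over a $\Theta(1/\sqrt n)$-neighborhood of $z_{n,t}$ — a distribution on $\calX^n$ of the form $\int P^n_{X|Z=z_{n,t}+w/\sqrt n}\,\nu(\rmd w)$ for a suitable fixed smoothing kernel $\nu$ (e.g.\ Gaussian with covariance tuned to $J_z$).

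The key steps, in order, would be: (1) fix the lattice and the smoothing kernel $\nu$, and observe that the coding length is $\tfrac d2\log n+O(1)$ regardless; (2) bound $\eps^{(2)}_\rmb$ by the average over $z$ of $\bbE_z[D(\,\cdot\,\|P^n_{X|Z=z})]$, where the first argument is $\varphi_n$ composed with the encoder — as in \eqref{eqn:use_jens}, using convexity of relative entropy to pull the mixture over $\hatz_n$ outside; (3) for the exponential family, use the closed-form $D(P^n_{X|Z=a}\|P^n_{X|Z=b}) = n\big(A(b)-A(a)-(b-a)^\top\nabla A(a)\big)$, which is the Bregman divergence of $nA$, so that it depends smoothly and quadratically (to leading order) on $a-b$; (4) choose $\nu$ and the lattice span $t$ so that the smoothing \emph{cancels} the deterministic quadratic bias: the average of the Bregman divergence from the smoothed decoder to $P^n_{X|Z=z}$, integrated against the sampling distribution of $\hatz_n$ (which is itself asymptotically Gaussian with covariance $\tfrac1n J_z^{-1}$ by local asymptotic normality, though here I only need the second-moment statement of Assumption (iii)), tends to $0$; (5) take $n\to\infty$, then $t\to 0$, to conclude $R^{(2)}_\rmb(0)\le\tfrac d2$.

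The main obstacle I anticipate is step (4): making the smoothed-decoder bias vanish rather than merely be bounded. The naive MDL decoder $P^n_{X|Z=z_{n,t}}$ gives $D \approx \tfrac{n}{2}\|z_{n,t}-z\|_{J_z}^2 \approx \tfrac12 \|J_z^{1/2}(\hatz_n-z)\|^2 \cdot(1+o(1)) \to \tfrac d2$ in expectation — this is the Cramér–Rao floor and it does not go away by rounding more finely. To beat it one must use a decoder that is \emph{not} a point in the family. One clean route: because the family is exponential, the sufficient statistic $Y^{(n)}$ is exactly sufficient, so if the decoder could output $P^n_{X|Z=z,\,Y^{(n)}=y^{(n)}}$ (the $z$-free conditional) it would achieve zero error (cf.\ Section~\ref{sec:ss}); the difficulty is only that $Y^{(n)}$ takes $\Theta(n^d)$ values, too many to store. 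The fix is to store only the \emph{coarse} location $z_{n,t}$ of $Y^{(n)}$ and let the decoder reconstruct an \emph{approximate} version of the conditional-given-$Y^{(n)}$ distribution — e.g.\ a mixture of $P^n_{X|Z=\cdot}$ over the $O(n^d / n^{d/2}) = O(n^{d/2})$ fine cells inside one coarse cell, weighted by the Jeffreys-type prior. Verifying that this mixture is within relative entropy $o(1)$ of $P^n_{X|Z=z}$ uniformly requires the asymptotic normality of the posterior / Laplace approximation for exponential families, i.e.\ a Clarke–Barron-type estimate but now controlling $D(P^n_{X|Z=z}\,\|\,\text{local mixture})$ rather than $D(\text{mixture}\,\|\,P^n)$; this asymmetric bound, together with the exponential-family convexity of $A$, is where the real work lies. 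Once that estimate is in hand, the averaging over $z\sim\mu$ and the limits are routine.
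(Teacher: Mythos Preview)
Your diagnosis of why the MDL scheme of Lemma~\ref{lem:mdl} gets stuck at $\tfrac d2$ is correct, and you correctly recognize that the decoder must output something other than a single family member. But both of your proposed fixes still decode to \emph{mixtures of family members} $P^n_{X|Z=\cdot}$, and that cannot drive the relative-entropy error to zero. If you invoke your step~(2) (the Jensen upper bound, as in~\eqref{eqn:use_jens}), any such mixture is bounded above by an average of $D(P^n_{X|Z=z'}\|P^n_{X|Z=z})$ over the effective mixing law of $z'$; since $z'-z$ necessarily carries fluctuations of order $n^{-1/2}$ coming from the MLE, this average is at least $\tfrac d2+o(1)$ regardless of how you choose the smoothing kernel $\nu$---adding more smoothing only adds variance. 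And dropping Jensen does not help: in the Gaussian location model, if the encoder stores the quantized sample mean and the decoder outputs a uniform mixture of $N(z',1)^{\otimes n}$ over $z'$ in the cell, the reconstructed $\bar X$-marginal is (after the $\sqrt n$ rescaling) the convolution $\phi_{t,\alpha}*\phi$, and $D(\phi_{t,\alpha}*\phi\,\|\,\phi)\to D(N(0,2)\|N(0,1))=\tfrac12(1-\log 2)>0$ as $t\to 0$. So the ``local mixture of family members'' route is a dead end for the $\delta_2'=0$ target.

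The idea you mention and then set aside is the one that actually works. The paper's decoder, given the coarse cell index $\bar\eta$, outputs a uniform mixture of the \emph{conditional} laws $P_{X^n\mid Y^{(n)}=n\eta}$ over all fine values $\eta\in\beta_t^{-1}(\bar\eta)$---not a mixture of family members, and not an approximation to the conditional, but the exact ($z$-free) conditional at a randomized value of the sufficient statistic. Because both the true law $P^n_{X|Z=z}$ and this reconstructed law factor through the same channel $P_{X^n\mid Y^{(n)}}$, the relative entropy collapses to the relative entropy between their $Y^{(n)}$-marginals alone. By the CLT the true marginal is asymptotically standard normal and the reconstructed one is the quantized normal $\phi_{t,\alpha}$ of span~$t$, so the error is at most $\sup_{\alpha}D(\phi_{t,\alpha}\|\phi)$, which vanishes as $t\to 0$. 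No posterior-normality or Clarke--Barron estimate is required; the whole point is that the decoder never tries to approximate $P_{X^n\mid Y^{(n)}=y}$ by a family mixture, it uses that conditional exactly and only randomizes over \emph{which} $y$ to pretend was observed.
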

Note that this indeed implies the direct part of~\eqref{eqn:blind2_exp} since   $R_\rmb^{(2)}(\delta_2')  \le \frac{d}{2}$ for all $\delta_2'\in [0,\infty)$. This significantly improves over the case where we do not assume that $\{P_{X|Z=z}\}_{z\in\calZ}$ is an exponential family in \eqref{eqn:direct_b2} of  Lemma \ref{lem:mdl} since we could only prove that $R_\rmb^{(2)}(\delta_2')  \le \frac{d}{2}$ for all $\delta_2'\in [\frac{d}{2},\infty)$, which is much weaker. In other words, the blind code presented below for exponential families can realize the same error   performance  (which is  asymptotically zero) as the visible code  presented at the end of the proof of Lemma~\ref{lem:mdl}. We also observe that Assumption (i) holds for the $k$-nomial example discussed in Section \ref{sec:knomial} as the moment parameters $\bbE_z[T_i]$ belong  to $[-1,1]$, which is bounded.

\begin{proof}[Proof of Lemma \ref{lem:exp}]
First, to describe the encoder, we extract the sufficient statistics  from the data, i.e., we calculate
\begin{equation}
\hat{\eta}_i := \frac{ Y_{i}^{(n)}(x^n)}{n}  = \frac{1}{n}\sum_{j=1}^n Y_i(x_j)  ,\quad\forall\, i = 1,\ldots, d.
\end{equation}
Next we fix a lattice span $t>0$ and consider the subset of quantized moment parameters $\calH_{n,t} := \frac{t}{\sqrt{n}} \bbZ^d \cap\calH\subset\calH$ where recall that $\calH = \{\eta (z) : z\in\calZ\}$ is the set of feasible moment parameters (cf.~Section \ref{sec:exp_fam}). Given the observed value of  $\hat{\eta} = (\hat{\eta}_1,\ldots, \hat{\eta}_n)$, we choose the closest point in the lattice to it, i.e.,  we choose
\begin{equation}
\beta_t(\eta):=\argmin_{\eta'\in\calH_{n,t}} \|\eta'-\eta\|. \label{eqn:encoding_map}
\end{equation}
The encoder  $f_{\rmb,n}$ is the map $x^n \mapsto\beta_t(\hat{\eta})$. 
For this encoder, the memory size is $|\calH_{n,t}|$. Thus the coding length is $\log |\calH_{n,t}|=\frac{d}{2}\log n + O(1)$, where the dependence in $t$ is in the $O(1)$ term. 

Now, we describe the decoder.  Let $Y^{(n)} = (Y^{(n)}_1 ,\ldots, Y^{(n)}_d)$ and $\eta=(\eta_1,\ldots,\eta_d)$. The decoder $\varphi_n$ is the map   
\begin{equation}
\bar{\eta}  \mapsto \frac{1}{|\beta_t^{-1}(\bar{\eta} )|}\sum_{ \eta \in \beta_t^{-1}(\bar{\eta} )} P_{X^n |   Y^{(n)}  =n\eta  } .
\end{equation}
In other words, given the estimate $\beta_t(\eta) \in \calH_{n,t}$, we consider a uniform mixture of all the distributions $ P_{X^n |  Y^{(n)}  =n\eta  }$ where $\eta$ runs over all points in the lattice that map to $\beta_t(\eta)$ under the encoding map in \eqref{eqn:encoding_map}.

In the following calculation of the error, we first consider the scalar case in which $d=1$ for simplicity. At the end of the proof, we show how to extend the ideas to the case where $d>1$. A few additional notational conventions are needed. For each element  $\bar{\eta}$, denote the uniform distribution on the subset $\beta_t^{-1}(\bar{\eta} )$ as $U_{\beta_t^{-1}(\bar{\eta} )}$. Next, denote the transition kernel (channel) that maps   the mean parameter $\bar{\eta}$ to the uniform distribution on the set $\beta_t^{-1}(\bar{\eta} )$ (i.e.,  $U_{\beta_t^{-1}(\bar{\eta} )}$) as $U_{\beta_t^{-1}( Y) |Y}$. Denote the distribution of the random variable $Y_{1}^{(n)}(X^n)$ when $X^n\sim P_{X|Z=z}^n$ as $P_{ Y_{1}^{(n)}  |Z=z}$.  Let the variance of $Y_{1}^{(n)}$ under distribution $P_{X|Z=z}$ be $V_z$. The normalizing linear transformation $y\mapsto \sqrt{n}(y-\bbE_z[Y_{1}^{(n)}])/\sqrt{V_z}$ is denoted as $g_z$.

\begin{figure}[t]
\centering
\includegraphics[width=.995\columnwidth]{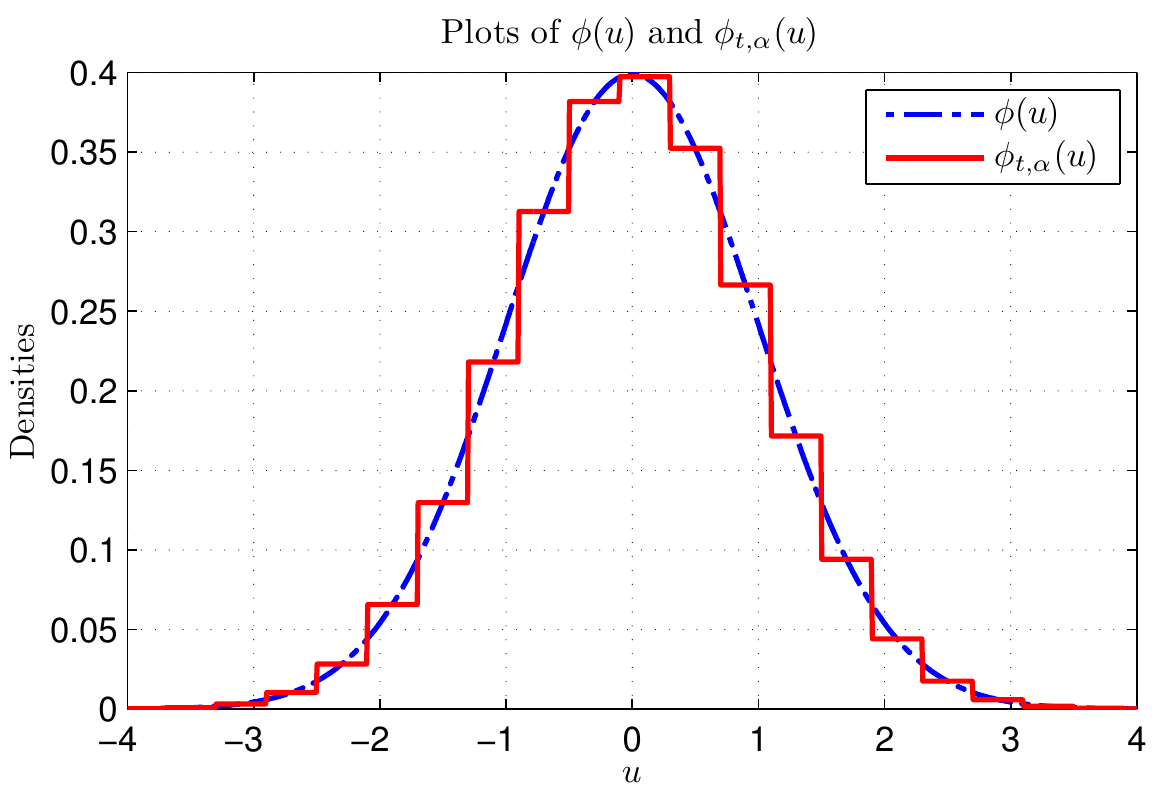}
\caption{Illustration of the density $\phi_{t,\alpha}$ in \eqref{eqn:phi_t_alpha} with $t = 0.4$ and $\alpha=0.3$ (solid red line). The standard normal density $\phi$ is also shown (broken blue line).  The two curves become increasingly close as $t\to 0$.}
\label{fig:quantize}
\end{figure}
%\red{Should we change $F$ to the relative entropy criterion $D$ in the following?} 
Since $Y_{1}^{(n)}$ is a sufficient statistic relative to the exponential family $\{P_{X|Z=z}\}_{z\in\calZ}$, we know that  for any error criterion (in particular the relative entropy criterion),
\begin{align}
&D( \varphi_n\cdot f_{\rmb,n} \cdot P_{X|Z=z}^n  \,\| \, P_{X|Z=z}^n)\nn\\*
 &=D( U_{\beta_t^{-1}( Y ) |Y} \cdot P_{Y_{1}^{(n)}|Z=z} \,\| \,P_{Y_{1}^{(n)}|Z=z} )\\
&= D \big( (U_{\beta_t^{-1}( Y ) |Y} \cdot P_{Y_{1}^{(n)}|Z=z})\cdot g_z^{-1}\,\| \, P_{Y_{1}^{(n)}|Z=z }\cdot g_z^{-1} \big),
\end{align}
where the last equality follows from the fact that the function $g_z$ is one-to-one. By the central limit theorem, $P_{Y_{1}^{(n)}|Z=z }\cdot g_z^{-1}$ converges to the standard normal distribution $\phi(u) \propto \exp(-u^2/2)$. On the other hand, by the definition of $U_{\beta_t^{-1}( Y ) |Y} $ (which results from the construction of the encoder in \eqref{eqn:encoding_map}), the distribution $(U_{\beta_t^{-1}( Y ) |Y} \cdot P_{Y_{1}^{(n)}|Z=z})\cdot g_z^{-1} $ converges to  a quantization of the standard normal distribution with span $t$, namely,
%$\phi_{ t,\alpha}$  (\red{need some further explanations here}). Given $\alpha \in [0,t]$, we let $\phi_{t,\alpha}$ be the discretization of the standard normal distribution with lattice $\alpha +  t\bbZ$, i.e., 
\begin{equation}
\phi_{t,\alpha} (u ) \propto \phi\big(\alpha + (j+ 1/2 ) t\big) ,\;\forall\,u\in ( \alpha+jt,\alpha+ (j+1)t] \label{eqn:phi_t_alpha} ,
\end{equation}
where $\alpha\in [0,t]$ and the constant of proportionality in \eqref{eqn:phi_t_alpha}  is chosen  so that $\int_\bbR\phi_{t,\alpha}(u)\,\rmd u=1$. See Fig.~\ref{fig:quantize} for an illustration of the probability  density function in \eqref{eqn:phi_t_alpha}. 
%However, note that $\alpha \in [0,t]$  depends on $n$ in general. % Hence, the sequence $\{(U_{\beta_t^{-1}( Y ) |Y} \cdot P_{Y_{n,1}|Z=z})\cdot g_z^{-1} \}_{n\in\bbN}$ need not converge in distribution to any well-defined distribution. 
%\red{Verify this argument}  
Thus, we have the upper bound
%By the above argument and notations,
\begin{align}
&\varlimsup_{n\to\infty} D\big( (U_{\beta_t^{-1}( Y ) |Y} \cdot P_{Y_{1}^{(n)}|Z=z})\cdot g_z^{-1} \,\| \, P_{Y_{1}^{(n)}|Z=z }\cdot g_z^{-1} \big)\nn\\*
&\qquad\le\sup_{\alpha \in [0,t]} D( \phi_{t,\alpha} \,\| \, \phi).   \label{eqn:compact_conv}
\end{align}
Since the   convergence in \eqref{eqn:compact_conv} is  uniform on compact sets (compact convergence), we have 
\begin{align}
&\varlimsup_{n\to\infty}\eps_{\rmb}^{(2)}(f_{\rmb,n}, \varphi_n)\nn\\*
&=\varlimsup_{n\to\infty}\int_{\calZ}D( \varphi_n\cdot f_{\rmb,n}\cdot P_{X|Z=z}^n\,\| \,P_{X|Z=z}^n)\, \mu(\rmd z) \label{eqn:compact_convergence}\\
&\le \int_{\calZ}\varlimsup_{n\to\infty} D( \varphi_n\cdot f_{\rmb,n}\cdot P_{X|Z=z}^n\,\| \, P_{X|Z=z}^n)\, \mu(\rmd z)\\
&\le\sup_{\alpha \in [0,t]} D( \phi_{t,\alpha} \,\| \, \phi).   \label{eqn:sup_alpha}
\end{align}
Now since the above holds for all $t>0$, we can let $t$ tend  to $0$ (so the size of the quantization regions decreases to $0$). Consequently, the right-hand-side of \eqref{eqn:sup_alpha} also tends to $0$ and hence,
$
\lim_{n\to\infty}\eps_{\rmb}^{(2)}(f_{\rmb,n}, \varphi_n)=0$.

For  general dimension $d>1$, we can show the desired statement as follows. Let $\phi^{(d)}$ be the $d$-dimensional standard normal distribution. Given $\alpha\in [0,t]^d$,  let  $\phi^{(d)}_{t,\alpha}$  be the corresponding quantization of the $d$-dimensional standard normal distribution with cutting point  $\alpha$ and span $t$ (cf.~\eqref{eqn:phi_t_alpha} for the one-dimensional distribution). Then in the same way,
\begin{equation}
\varlimsup_{n\to\infty}\eps_{\rmb}^{(2)}(f_{\rmb,n}, \varphi_n)\le\sup_{\alpha\in [0,t]^d} D(  \phi_{t,\alpha}^{(d)}\,\| \,\phi^{(d)}).
\end{equation}
Similarly, we can take $t$ to tend to zero and the error criterion vanishes as $n\to\infty$. The logarithm of the memory size (coding length) is thus $\frac{d}{2}\log n +O(1)$. This proves Lemma \ref{lem:exp}.
\end{proof}

\subsection{Direct Part For The General Case}\label{sec:gen}
In this section we treat the general case (not necessarily exponential family).  We prove the following lemma which establishes the direct part  (upper bound) in the blind setting under the variational distance criterion as in \eqref{eqn:blind1}. % for \eqref{eqn:vis1} and~\eqref{eqn:blind1}.
\begin{lemma} \label{lem:gen}
Assuming (i), (ii), (iv), and (v), we have
\begin{align}
R_\rmb^{(1)}(0)  \le\frac{d}{2}. \label{eqn:gen}
\end{align}
\end{lemma}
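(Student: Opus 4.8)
The plan is to imitate the exponential-family construction of Lemma~\ref{lem:exp}, with the exact sufficient statistic $Y^{(n)}$ replaced by the maximum likelihood estimator $Y':=\hatz_{\mathrm{ML}}(X^n)$, which under Assumptions (iv)--(v) is only \emph{locally asymptotically} sufficient. Fix a lattice span $t>0$ and, as in Lemma~\ref{lem:mdl}, let $\calZ_{n,t}:=\frac{t}{\sqrt n}\bbZ^d\cap\calZ$. The encoder computes $Y'$ and emits the nearest lattice point $\psi_t(Y')\in\calZ_{n,t}$; by Assumption (i) the memory $\calY_n=\calZ_{n,t}$ has size $\Theta(n^{d/2})$, so the coding length is $\frac d2\log n+O(1)$ with all the $t$-dependence hidden in the $O(1)$. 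For $\bar z\in\calZ_{n,t}$ let $\calR(\bar z)$ denote the set of values of $Y'$ that the quantizer assigns to $\bar z$ (a cube of side $\Theta(t/\sqrt n)$ about $\bar z$, intersected with the support of $Y'$), let $U_{\calR(\bar z)}$ be the uniform law on it, and set the decoder to be $\varphi_n(\bar z):=P_{X^n|Y',Z=\bar z}\cdot U_{\calR(\bar z)}$, i.e.\ re-randomize the MLE value uniformly inside its cell and then append the conditional law of $X^n$ given that value, evaluated at the reference parameter $\bar z$.

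To bound the error I would fix the true parameter $z$ and pass to the rescaled coordinate $h_z(w):=\sqrt n\,J_z^{1/2}(w-z)$. By Assumption (iv) the law of $Y'$ under $P_{X|Z=z}^n$, pushed forward by $h_z$, converges in $\ell_1$ to $\phi^{(d)}$; hence the law $\tilde P_{Y'}$ of the re-randomized MLE produced by $\varphi_n\cdot f_{\rmb,n}$ converges to a quantization of $\phi^{(d)}$ whose cells have diameter $O(t)$ (a parallelepiped version of the $\phi^{(d)}_{t,\alpha}$ of \eqref{eqn:phi_t_alpha}, the distortion coming from $J_z^{1/2}$ being controlled uniformly on $\calZ$ by boundedness of $J_z$), so that $\varlimsup_n\|\tilde P_{Y'}-P_{Y'|Z=z}\|_1\le\omega(t)$ with $\omega(t)\to0$ as $t\to0$. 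For the conditional parts, Assumption (v) gives, for the true $z$ and any reference $z''$ with $\sqrt n\|z-z''\|$ bounded, that $\|P_{X^n|Y',Z=z''}\cdot P_{Y'|Z=z}-P_{X|Z=z}^n\|_1\to0$. Applying this once to rewrite $P_{X|Z=z}^n$ as $P_{X^n|Y',Z=z}\cdot P_{Y'|Z=z}$, and once to replace each conditional $P_{X^n|Y',Z=\bar z}$ appearing in $\varphi_n\cdot f_{\rmb,n}\cdot P_{X|Z=z}^n$ by the single conditional $P_{X^n|Y',Z=z}$, and using that $P\mapsto P_{X^n|Y',Z=z}\cdot P$ is $\ell_1$-nonexpansive, reduces the reconstruction error to $\|\tilde P_{Y'}-P_{Y'|Z=z}\|_1$ plus the $o(1)$ costs of the two substitutions.

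Two points need care. First, the substitution of $P_{X^n|Y',Z=z}$ for $P_{X^n|Y',Z=\bar z}$ is only licensed by (v) when $\bar z$ lies within $O(1/\sqrt n)$ of $z$, whereas $\bar z=\psi_t(Y')$ fluctuates on the $\Theta(\sqrt{\log n}/\sqrt n)$ scale; I would therefore restrict to the event $\{\sqrt n\|\bar z-z\|\le C\}$, whose complement has probability at most $\epsilon(C)$ with $\epsilon(C)\to0$ by the tightness of the rescaled $Y'$ (Assumption (iv)), incur an extra $2\epsilon(C)$, and let $C\to\infty$ at the end. Second, the cell shapes in the $h_z$-frame depend on the unknown $z$ through $J_z^{1/2}$, but since $z\mapsto J_z$ is continuous on the bounded $\calZ$ and the cells have vanishing diameter, this perturbs the limiting quantized Gaussian only by $o(1)$ and is absorbed into $\omega(t)$. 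Combining everything, $\varlimsup_n\|\varphi_n\cdot f_{\rmb,n}\cdot P_{X|Z=z}^n-P_{X|Z=z}^n\|_1\le\omega(t)+2\epsilon(C)$ for each $z\in\calZ$; integrating this pointwise bound over $\calZ$ is legitimate by a reverse-Fatou argument (the integrand is bounded by $2$), exactly as the compact-convergence step in the proof of Lemma~\ref{lem:exp} and using the compact-convergence clause of Assumption (ii), so $\varlimsup_n\eps_{\rmb}^{(1)}(f_{\rmb,n},\varphi_n)\le\omega(t)+2\epsilon(C)$. Sending $C\to\infty$, then $t\to0$, and diagonalizing over a sequence $t_m\downarrow0$ yields a single blind code sequence of length $\frac d2\log n+O(1)$ with $\eps_{\rmb}^{(1)}\to0$, hence $R_{\rmb}^{(1)}(0)\le\frac d2$.

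The hard part is the genuine $z$-dependence of the conditional law $P_{X^n|Y',Z=z}$: unlike in Lemma~\ref{lem:exp} there is no exact sufficient statistic to condition on, and the quantized reference $\bar z$ available to the decoder is correct only to first order and only on the typical $\Theta(1/\sqrt n)$ scale, not on the full $\Theta(\sqrt{\log n}/\sqrt n)$ fluctuation scale of the MLE. The truncation-plus-tightness step is what bridges this gap, and the delicate bookkeeping is making the two invocations of Assumption (v) compatible (same reference point, bounded rescaled displacement in each). Everything else---the lattice count, the Gaussian quantization estimate, and passing the bound under the integral---is routine.
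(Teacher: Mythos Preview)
Your proposal is correct and follows essentially the same route as the paper's proof: the same lattice encoder on the MLE, a decoder that re-randomizes the MLE uniformly within its quantization cell and then generates $X^n$ conditionally, and an error decomposition into a ``marginal-of-$Y'$'' piece (controlled by Assumption~(iv) and bounded by $\sup_\alpha\|\phi_{t,\alpha}^{(d)}-\phi^{(d)}\|_1$) plus a ``conditional-given-$Y'$'' piece (controlled by Assumption~(v)).

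The two proofs differ only in bookkeeping. The paper's decoder uses $P_{X^n\mid Y'=z',\,Y=\bar z}$ (the prior-averaged conditional given the MLE and its quantization), whereas you parametrize the conditional explicitly by the lattice point, $P_{X^n\mid Y',\,Z=\bar z}$; your choice is arguably cleaner because it makes the invocation of Assumption~(v) direct rather than via the paper's somewhat informal identification of $Y$ with an indicator $\bone\{Z=\tilde z\}$. You are also more explicit about the truncation: the paper simply asserts that $Y-z$ ``behaves as $\Theta(1/\sqrt n)$ with probability tending to one'' and then applies~(v), whereas you correctly separate the event $\{\sqrt n\|\bar z-z\|\le C\}$, pay $2\epsilon(C)$ on its complement using tightness from~(iv), and send $C\to\infty$ at the end. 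This is the right way to reconcile the $O(1/\sqrt n)$ hypothesis of~(v) with the full fluctuation scale of the MLE, and it is a point the paper's proof treats heuristically. Otherwise the two arguments are the same.
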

Note that this implies the upper bound to \eqref{eqn:blind1} because \eqref{eqn:gen} implies that $R_\rmb^{(1)}(\delta_1')  \le\frac{d}{2}$ for all $\delta_1' \in [0,2)$.  %This also implies the upper bound to \eqref{eqn:vis1}  due to \eqref{eqn:visible_blind}, i.e., that a blind code is less powerful compared to a visible code. 
\begin{proof}[Proof of Lemma \ref{lem:gen}]
Fix a lattice span $t>0$ and choose the memory $\calY_n$ to be the quantized parameter space (lattice) $\calZ_{n,t} := \frac{t}{\sqrt{n}}\bbZ^d \cap\calZ\subset\calZ$. Given the observed MLE $\hatz_{\mathrm{ML}}(X^n)=z$, we choose  the encoder output to be the closest point in this lattice, i.e.,
\begin{align}
\beta_t(z):= \argmin_{z' \in \calZ_{n,t}} \| z'-z\|. \label{eqn:beta_t_z}
\end{align}
The encoder  $f_{\rmb,n}$ is the map from $x^n\mapsto\beta_t(\hatz_{\mathrm{ML}}(x^n))$.  Thus the code has memory $\calY_n=\calZ_{n,t}$ in which the coding length is $\log |\calY_n|=\log|\calZ_{n,t}|=\frac{d}{2}\log n +O(1)$. 

Now, to describe the decoder and the subsequent analysis, we use some simplified notation. Let $Y$ and $Y'$ denote the random variables $\beta_t(\hatz_{\mathrm{ML}}(X^n))$ and  $\hatz_{\mathrm{ML}}(X^n)$ respectively.  These can be thought of as the quantized MLE and the MLE respectively. As usual $Z\in\calZ$ is the original parameter. The decoder $\varphi_n$ is then the following map from elements in the memory to distributions in $\calP(\calX^n)$:
\begin{equation}
\barz\mapsto\frac{1}{|\beta_t^{-1}(\barz)|} \sum_{z\in \beta_t^{-1}(\barz)}  P_{X^n | Y'=z, Y=\barz}.
\end{equation}
Essentially, the decoder takes the  quantized MLE $\beta_t(\hatz_{\mathrm{ML}}(X^n))$  and outputs a uniform mixture over all ``compatible'' conditional distributions (i.e., all conditional distributions $P_{X^n | Y'=z, Y=\barz}$ whose parameter $z$ lies in the quantization cell $\beta_t^{-1}(\barz)$).

We  now estimate the error. We first consider the case $d=1$ for simplicity. For each element $\barz$, denote the uniform distribution on the subset $\beta_t^{-1}(\barz)$ as $U_{Y'| Y=\barz}$. We denote the transition kernel corresponding to the map $\barz\mapsto U_{Y'| Y=\barz}$ as $P_{X^n | Y', Y}$. Then the decoder $\varphi_n$ can alternatively be written as the map $z\mapsto P_{X^n | Y',Y=z}  \cdot U_{Y'| Y=z}$.  Now the error  measured according to the  variational distance can be written as % (\red{notation here needs improving or explaining})
\begin{align}
&\eps_{\rmb}^{(1)}( f_{\rmb,n}, \varphi_n)  \nn\\*
&=  \int_\calZ\left\| \varphi_n\cdot f_{\rmb,n} \cdot P_{X^n|Z=z} - P_{X^n|Z=z} \right\|_{1}\,\mu(\rmd z)\\
&=\int_\calZ \big\| (P_{X^n | Y',Y }\cdot  U_{Y'| Y}) \times P_{Y = \beta_t (\hatz_{\mathrm{ML}}(X^n)) | Z=z  } \nn\\*
&\qquad- P_{X^n|Z=z}\big\|_{1}\,\mu(\rmd z)\label{eqn:def_enc_dec}%\\
%&\le A + B
\end{align}
where \eqref{eqn:def_enc_dec} follows from the definitions of the encoder $P_{Y = \beta_t (\hatz_{\mathrm{ML}}(X^n)) | Z=z }$ (given the parameter is $Z=z$) and decoder $P_{X^n | Y',Y=z}  \cdot U_{Y'| Y=z}$. For clarity, we write $P_{Y = \beta_t (\hatz_{\mathrm{ML}}(X^n)) | Z=z }$  for the distribution of the quantized MLE $\beta_t (\hatz_{\mathrm{ML}}(X^n))$ given that the samples $X^n$ are independently generated from the distribution parametrized by $z\in\calZ$, i.e., $P_{X|Z=z}^n$.  Let the integrand in \eqref{eqn:def_enc_dec} for fixed $z$ be denoted as  $\eps_{\rmb}^{(1)}( f_{\rmb,n}, \varphi_n;z) $. By the triangle inequality, 
\begin{equation}
\eps_{\rmb}^{(1)}( f_{\rmb,n}, \varphi_n;z)\le A_n+B_n, \label{eqn:ab}
\end{equation}
where  the sequences
$A_n$ and $B_n$ are defined as
\begin{align}
A_n & := \big\| (P_{X^n | Y',Y }\cdot  U_{Y'| Y}) \times P_{Y = \beta_t (\hatz_{\mathrm{ML}}(X^n)) | Z=z  }   \nn\\*
&\qquad -  P_{X^n | Y',Y }\cdot P_{Y=\beta_t(\hatz_{\mathrm{ML}}(X^n)), Y' = \hatz_{\mathrm{ML}}(X^n) | Z=z} \big\|_1, \label{eqn:An} \end{align}
and 
\begin{align}
 B_n &:=\big\|  P_{X^n | Y',Y }\cdot P_{Y=\beta_t(\hatz_{\mathrm{ML}}(X^n)), Y' = \hatz_{\mathrm{ML}}(X^n) | Z=z} \nn\\*
&\qquad - P_{X^n|Z=z}\big\|_{1}.  \label{eqn:Bn_def}
\end{align}
Note that $P_{Y=\beta_t(\hatz_{\mathrm{ML}}(X^n)), Y' = \hatz_{\mathrm{ML}}(X^n) | Z=z}$ is the joint distribution of the quantized MLE $\beta_t(\hatz_{\mathrm{ML}}(X^n))$ and the true MLE $\hatz_{\mathrm{ML}}(X^n)$ given that the samples $X^n$ are independently generated from the distribution parametrized by $z\in\calZ$.  
Now,  by the data processing inequality for the variational distance  (i.e., $\| P_{X|Y}\cdot P_{Y|Z=z}- P_{X|Y}\cdot Q_{Y|Z=z}\|_1\le\|   P_{Y|Z=z}-  Q_{Y|Z=z}\|_1$),  the term $A_n$ in \eqref{eqn:An} can be bounded as 
%&\le \big\| (P_{X^n | Y',Y }\cdot  U_{Y'| Y}) \times P_{Y = \beta_t (\hatz_{\mathrm{ML}}(X^n)) | Z=z  } \nn\\*
%&\qquad -  P_{X^n | Y',Y }\cdot P_{Y=\beta_t(\hatz_{\mathrm{ML}}(X^n)), Y' = \hatz_{\mathrm{ML}}(X^n) | Z=z} \big\|_1\nn\\*
%&\quad +\left\|  P_{X^n | Y',Y }\cdot P_{Y=\beta_t(\hatz_{\mathrm{ML}}(X^n)), Y' = \hatz_{\mathrm{ML}}(X^n) | Z=z}
% - P_{X^n|Z=z}\right\|_{1}\label{eqn:triangle}\\
\begin{align}
A_n &\le \big\| U_{Y'| Y} \times P_{Y = \beta_t (\hatz_{\mathrm{ML}}(X^n)) | Z=z  }  \nn\\*
&\quad-    P_{Y=\beta_t(\hatz_{\mathrm{ML}}(X^n)), Y' = \hatz_{\mathrm{ML}}(X^n) | Z=z} \big\|_1  \label{eqn:bd_An}
\end{align} 
%where \eqref{eqn:def_enc_dec} follows from the definitions of the encoder $P_{Y = \beta_t (\hatz_{\mathrm{ML}}(X^n)) | Z=z }$ (given the parameter is $Z=z$) and decoder $P_{X^n | Y',Y=z}  \cdot U_{Y'| Y=z}$, \eqref{eqn:triangle} follows from the triangle inequality, and \eqref{eqn:dpi_var1} follows from the  data processing inequality for the variational distance, i.e., $\| P_{X|Y}\cdot P_{Y|Z=z}- P_{X|Y}\cdot Q_{Y|Z=z}\|_1\le\|   P_{Y|Z=z}-  Q_{Y|Z=z}\|_1$. 
%\red{Also, by   regarding $Y$ as the random variable $Z$  in the definition of $B_n$ in \eqref{eqn:Bn_def}, we have
%\begin{equation} 
%B_n  =\big\|  P_{X^n | Y'  , Z=\tilz}\cdot P_{ Y' = \hatz_{\mathrm{ML}}(X^n) | Z=z}- P_{X^n|Z=z}\big\|_{1} \label{eqn:bn_ub}
%\end{equation}
% for some $\tilz\in\bbR^d$.} 
 We now analyze the right-hand-sides of  \eqref{eqn:bd_An} and \eqref{eqn:Bn_def} in turn.

%This is because by \eqref{eqn:beta_t_z}, $\beta_t(z)$ differs from $z$ by no more than  $\Theta(\frac{1}{\sqrt{n}})$. 
%\red{these two claims have to be explained further.} 
By a similar reasoning as in the proof of Lemma~\ref{lem:exp}, $U_{Y'| Y} \times P_{Y = \beta_t (\hatz_{\mathrm{ML}}(X^n)) | Z=z  }$  converges to the quantization of the standard normal distribution $\phi_{t,\alpha}$   (defined in \eqref{eqn:phi_t_alpha}) by the local asymptotic normality   assumption as stated in \eqref{eqn:local_asymp_norm} (Assumption~(iv)). Furthermore, since $Y$ is a deterministic function of $Y'$,  we have the  relation $  P_{Y=\beta_t(\hatz_{\mathrm{ML}}(X^n)), Y' = \hatz_{\mathrm{ML}}(X^n) | Z=z}(y,y') = P_{Y' = \hatz_{\mathrm{ML}}(X^n) | Z=z}(y)\bone\{ y=\beta_t(y')\} $. Hence,   the distribution $  P_{Y=\beta_t(\hatz_{\mathrm{ML}}(X^n)), Y' = \hatz_{\mathrm{ML}}(X^n) | Z=z}$ converges to the    standard normal distribution $\phi$ again by the   local asymptotic normality   assumption as stated in \eqref{eqn:local_asymp_norm} (Assumption (iv)).  Applying the triangle inequality   to the right-hand-side of \eqref{eqn:bd_An}, we have 
\begin{align}
&\varlimsup_{n\to\infty}A_n  \nn\\* %\le \varlimsup_{n\to\infty} \Big\{\big \| U_{Y'| Y} \times P_{Y = \beta_t (\hatz_{\mathrm{ML}}(X^n)) | Z=z  } -    P_{Y=\beta_t(\hatz_{\mathrm{ML}}(X^n)), Y' = \hatz_{\mathrm{ML}}(X^n) | Z=z} \big\|_1\nn\\*
&\le \varlimsup_{n\to\infty}\Big\{\big \| U_{Y'| Y} \times P_{Y = \beta_t (\hatz_{\mathrm{ML}}(X^n)) | Z=z  }   -\phi_{t,\alpha} \big\|_1 \nn\\*
&\qquad+ \big\| \phi_{t,\alpha}-\phi\big\|_1   \nn\\*
&\qquad +\big\| \phi- P_{Y=\beta_t(\hatz_{\mathrm{ML}}(X^n)), Y' = \hatz_{\mathrm{ML}}(X^n) | Z=z} \big\|_1\Big\}\\
 &\le \sup_{\alpha \in [0,t]}\big \| \phi_{t,\alpha}-\phi\big\|_1.
\end{align}
%Combining this bound with that in \eqref{eqn:bd_An}, we obtain
%\begin{equation}
%\varlimsup_{n\to\infty}A_n \le \sup_{\alpha \in [0,t]}\big \| \phi_{t,\alpha}-\phi\big\|_1.
%\end{equation}

Now we analyze the right-hand-side of \eqref{eqn:Bn_def}. First, note that  $Y'-z = \hatz_{\mathrm{ML}}(X^n) - z $ behaves as $\Theta(\frac{1}{\sqrt{n}})$ with   probability  tending to one by the central limit theorem (local asymptotic normality). Since  the quantization level is also of the order $\Theta(\frac{1}{\sqrt{n}})$, the   difference $Y-z=\beta_t(\hatz_{\mathrm{ML}}(X^n)) - z$ also behaves as $\Theta(\frac{1}{\sqrt{n}})$ with probability  tending to one. %Thus, by \eqref{eqn:bn_ub} and the local asymptotic sufficiency assumption  as stated in~\eqref{eqn:local_asymp_suff} (Assumption (v)), we have 
Hence,  by   regarding $Y$ as the random variable $\bone\{Z=\tilz\}$ for some $\tilz\in\bbR^d$ that differs from $z\in\calZ\subset\bbR^d$   by  $\Theta(\frac{1}{\sqrt{n}})$,  we may write 
\begin{align}
B_n  \le\big\|  P_{X^n | Y'  , Z=\tilz}\cdot P_{ Y' = \hatz_{\mathrm{ML}}(X^n) | Z=z}- P_{X^n|Z=z}\big\|_{1} .\label{eqn:bn_ub}
\end{align}
%\end{equation}
%Since the difference between $z$ and $\tilz$ in~\eqref{eqn:bn_ub} is of the order $\Theta(\frac{1}{\sqrt{n}})$, 
At this point, we may apply  the local asymptotic sufficiency assumption as stated in \eqref{eqn:local_asymp_suff} (Assumption (v)) to \eqref{eqn:bn_ub}, yielding
%\begin{align}
%B_n &=\lim_{n\to\infty}\big\|  P_{X^n | Y',Y }\cdot P_{Y=\beta_t(\hatz_{\mathrm{ML}}(X^n)), Y' = \hatz_{\mathrm{ML}}(X^n) | Z=z}
% - P_{X^n|Z=z}\big\|_{1} \\ 
% &\le \lim_{n\to\infty}\big\|  P_{X^n | Y'  }\cdot P_{ Y' = \hatz_{\mathrm{ML}}(X^n) | Z=z}
% - P_{X^n|Z=z}\big\|_{1} \\ &  =0.
%\end{align}
\begin{equation}
\lim_{n\to\infty}B_n=0.
\end{equation}

By \eqref{eqn:ab}, and similar compact convergence arguments as those leading from \eqref{eqn:compact_convergence} to \eqref{eqn:sup_alpha},  we find that
\begin{equation}
\varlimsup_{n\to\infty} \eps_{\rmb}^{(1)}( f_{\rmb,n}, \varphi_n)\le\sup_{\alpha \in [0,t]}\big \| \phi_{t,\alpha}-\phi\big\|_1.
\end{equation}
Since this statement holds for all $t>0$, taking the limit $t\to 0$,   we see that the asymptotic error $\lim_{n\to\infty}\eps_{\rmb}^{(1)}( f_{\rmb,n}, \varphi_n) =0$. So in the general case for $d=1$, we can achieve a  memory length (log of memory size or coding length) of $\frac{1}{2}\log n + O(1)$. 

The case in which $d>1$ can be analyzed in a completely analogous manner and we can conclude that a  memory length of $\frac{d}{2}\log n + O(1)$ can be achieved. 
\end{proof}
\section{Proofs of Converse Parts of Theorem \ref{thm:vis}} \label{sec:conv}
%In this section, we prove the converse parts (lower bounds) to Theorem \ref{thm:vis}. We will focus on the visible cases in \eqref{eqn:vis1} and \eqref{eqn:vis2} only because according to \eqref{eqn:visible_blind},  a converse for the visible case implies the same for the blind case. Essentially, by \eqref{eqn:bl_err}, a visible code cannot be outperformed by a blind code. In Section \ref{sec:conv_re}, we   prove a strong converse for the relative entropy error  criterion in~\eqref{eqn:vis2} using the Pythagorean theorem. In Section~\ref{sec:clarke_barron}, we   prove a weak converse for the variational distance error criterion in~\eqref{eqn:vis1} using Clarke and Barron's formula for the relative entropy~\cite{clarke90, clarke94} between a parametrized distribution and a mixture distribution~\cite{clarke90}. Finally in Section~\ref{sec:str_conv}, we   strengthen the preceding result by proving a strong converse for the variational distance error criterion.

In this section, we prove the converse parts (lower bounds)  to Theorem \ref{thm:vis}. 
We will only focus on the visible cases in \eqref{eqn:vis1} and \eqref{eqn:vis2} because
according to \eqref{eqn:visible_blind}, a converse for the visible case implies the same for the blind case. Essentially, by \eqref{eqn:bl_err}, a visible code cannot be outperformed by a blind code.

Since our problem is closely related to Clarke and Barron's formula for the relative entropy between a parametrized distribution and a mixture
distribution~\cite{clarke90, clarke94}, we clarify the relation between our problem and this formula.
To clarify this relation, in Section \ref{sec:clarke_barron}, we prove a weak converse, namely, 
 the impossibility of further compression  from a rate of $\frac{d}{2}$ when the variational distance error criterion is asymptotically zero. This 
can be shown by a simple combination of Clarke and Barron's formula and the uniform continuity of mutual information~\cite{Zhang07} (also called Fannes inequality~\cite{Fannes73} in quantum information).
Since the variational distance goes to zero when the relative entropy goes to zero,
the weak converse under  the variational distance criterion implies 
the weak converse under the relative entropy criterion.
Hence, the arguments in Section~\ref{sec:clarke_barron}   demonstrate the weak converse under both error criteria.
These arguments  clarify the relation between Clarke and Barron's formula and our problem.
However, to the best of our knowledge, the strong converse parts cannot be shown via Clarke and Barron's formula, i.e., they require novel methods.
Furthermore, there is no similar relation between the strong converse parts under the variational distance and the relative entropy.
This is because there is no relation between code rates when the  relative entropy is arbitrarily large and  when 
the variational distance is arbitrarily close to $2$, i.e., its maximum value.
So, we need to prove two types of strong converse parts for each of the two error criteria. 
In Section~\ref{sec:conv_re}, we prove a strong converse for the relative entropy error criterion
 using the Pythagorean theorem for the relative entropy, thus demonstrating~\eqref{eqn:vis2}. 
In Section \ref{sec:str_conv}, we prove a strong converse for the variational distance error  criterion by a different, and novel, method, thus demonstrating~\eqref{eqn:vis1}.

%-------------------------------
%I suggest to put the weak converse in Section V-A.

\subsection{Weak Converse Under Both Criteria Based On Clarke And Barron's Formula} \label{sec:clarke_barron}

In this section, we prove the following weak converse.
\begin{lemma} \label{lem:clarke_barron}
The following lower bound holds
\begin{equation}
R_{\rmv}^{(1)}(0)\ge\frac{d}{2}. \label{eqn:clarke_barron_res}
\end{equation}
\end{lemma}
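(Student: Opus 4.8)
The plan is to bound the coding length below by a mutual information and then appeal to Clarke and Barron's asymptotic formula. Fix any admissible sequence of visible codes $\calC_{\rmv,n}=(f_{\rmv,n},\varphi_n)$, so that $\delta_n:=\eps_\rmv^{(1)}(\calC_{\rmv,n})\to 0$. With $Z\sim\mu$, set $Y_n:=f_{\rmv,n}(Z)$ and let $\hat X^n$ be a sample drawn from $\varphi_n(Y_n)$, so that $Z\markov Y_n\markov\hat X^n$ and the conditional law of $\hat X^n$ given $Z=z$ is $\varphi_n\cdot f_{\rmv,n}(z)$. Since $Y_n$ ranges over a set of cardinality $|\calC_{\rmv,n}|$, the data-processing inequality gives $\log|\calC_{\rmv,n}|\ge H(Y_n)\ge I(Z;Y_n)\ge I(Z;\hat X^n)$. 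The obstacle is that $I(Z;\hat X^n)$ cannot be compared to the target $I(Z;X^n)$ by continuity of mutual information at the level of $\calX^n$: a Fannes-type estimate would only control the gap by a term of order $\delta_n\cdot n\log|\calX|$, which is not $o(\log n)$.

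To get around this I would push both the genuine data $X^n$ and the reconstruction $\hat X^n$ through a common, asymptotically sufficient statistic having only polynomially many values. Using Assumption~(i), fix a bounded open set $\calZ'\supset\calZ$ and put $\calT_n:=\big(\tfrac1{\sqrt n}\bbZ^d\cap\calZ'\big)\cup\{\bot\}$, so that $|\calT_n|=\Theta(n^{d/2})$; let $\Psi_n:\calX^n\to\calT_n$ round $\hatz_{\mathrm{ML}}(x^n)$ to the nearest lattice point when it lies in $\calZ'$ and return the overflow symbol $\bot$ otherwise. Set $T:=\Psi_n(X^n)$ and $\hat T:=\Psi_n(\hat X^n)$. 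Then $\hat T$ is obtained from $Y_n$ through the Markov kernel ``decode, then apply $\Psi_n$'', so $Z\markov Y_n\markov\hat T$ and hence $\log|\calC_{\rmv,n}|\ge I(Z;\hat T)$; and since push-forward maps do not increase variational distance,
\[
\big\|P_{Z,T}-P_{Z,\hat T}\big\|_1 \;\le\; \int_\calZ \big\|\varphi_n\cdot f_{\rmv,n}(z)-P_{X|Z=z}^n\big\|_1\,\mu(\rmd z) \;=\; \delta_n,
\]
while the $Z$-marginals of the two joint laws both equal $\mu$.

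Now I would invoke the uniform continuity of mutual information (Fannes' inequality~\cite{Fannes73,Zhang07}), applied on the \emph{finite} alphabet $\calT_n$ rather than on $\calX^n$: writing $I(Z;T)=H(T)-H(T\mid Z)$ and $I(Z;\hat T)=H(\hat T)-H(\hat T\mid Z)$ and using that the $Z$-marginals coincide,
\[
\big|I(Z;T)-I(Z;\hat T)\big| \;\le\; \delta_n\log|\calT_n|+2\Hb(\delta_n/2) \;=\; \delta_n\Big(\tfrac d2\log n+O(1)\Big)+o(1)\;=\;o(\log n),
\]
since $\delta_n\to 0$. It remains to lower-bound $I(Z;T)$. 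By Clarke and Barron's formula~\cite{clarke90,clarke94}, $I(Z;X^n)=\tfrac d2\log n+O(1)$; and by Assumptions~(iv)--(v) (local asymptotic normality and sufficiency of the MLE, which persist under quantization at the natural scale $1/\sqrt n$) the truncated, quantized MLE remains asymptotically sufficient, i.e.\ $I(Z;X^n\mid T)=o(\log n)$, whence $I(Z;T)=I(Z;X^n)-I(Z;X^n\mid T)\ge\tfrac d2\log n-o(\log n)$. Chaining the three estimates, $\log|\calC_{\rmv,n}|\ge I(Z;\hat T)\ge I(Z;T)-o(\log n)\ge\tfrac d2\log n-o(\log n)$, so $\varlimsup_{n\to\infty}\tfrac{\log|\calC_{\rmv,n}|}{\log n}\ge\tfrac d2$ for every admissible code sequence, and taking the infimum yields \eqref{eqn:clarke_barron_res}. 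The delicate point is this last step --- certifying that the truncated, quantized MLE loses only $o(\log n)$ nats of information about $Z$ --- together with the bookkeeping around the overflow symbol $\bot$, whose probability must be shown to vanish quickly enough, uniformly over a neighbourhood of $\calZ$, not to spoil that estimate.
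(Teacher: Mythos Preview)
Your overall strategy coincides with the paper's: lower-bound $\log|\calC_{\rmv,n}|$ by a mutual information via data processing, push the data through a finite-valued statistic so that Fannes' inequality can be applied with a penalty only of order $\delta_n\log(\text{poly}(n))=o(\log n)$, and then invoke Clarke and Barron to identify the leading term $\tfrac{d}{2}\log n$. The difference is the choice of that finite-valued statistic. The paper first assumes $\calX$ is finite and takes $G_n(x^n)=\mathrm{type}(x^n)$, which is an \emph{exact} sufficient statistic, so that $I_{P_{X^n,Z}\cdot G_n^{-1}}(G_n(X^n);Z)=I_{P_{X^n,Z}}(X^n;Z)$ holds by definition and no asymptotic-sufficiency argument is needed; the extension to general $\calX$ is then done by a separate, one-paragraph partition trick (quantize $\calX$ into finitely many cells and note that reconstructing the quantized family is no harder than the original). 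Your choice of the quantized MLE handles general $\calX$ in one pass, but it trades exact sufficiency for an approximate statement that you must still justify.

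That is precisely where your proposal has a gap. You write $I(Z;T)=I(Z;X^n)-I(Z;X^n\mid T)$ and assert $I(Z;X^n\mid T)=o(\log n)$ ``by Assumptions~(iv)--(v)''. But (iv)--(v) are \emph{variational-distance} statements, pointwise in the local parameter $z'$, and $X^n$ lives on an alphabet of size $|\calX|^n$: a Fannes-type conversion from the variational bound in~(v) to a bound on $I(Z;X^n\mid T)$ would cost $\Theta(n)$, not $o(\log n)$, so the implication as written does not go through. If you want to salvage your route, it is cleaner to abandon the detour through $I(Z;X^n\mid T)$ and lower-bound $I(Z;T)$ directly, e.g.\ via $I(Z;T)=H(T)-H(T\mid Z)$: local asymptotic normality makes $H(T\mid Z=z)$ essentially the entropy of a discretized Gaussian at its natural scale, hence $O(1)$, while absolute continuity of $\mu$ forces $H(T)=\tfrac{d}{2}\log n+O(1)$. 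This is doable but requires uniform-in-$z$ control you have not yet supplied (and the overflow symbol $\bot$ needs the same care). The paper sidesteps all of this by using the type, for which the ``sufficiency'' step is an identity rather than an estimate.
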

This is, in fact, only a weak converse since asymptotically the error measured according to the variational distance must tend to zero. It is insufficient to show \eqref{eqn:vis1} but we present the proof to demonstrate the connection between Clarke and Barron's result in  \eqref{eqn:clarke_barron} to follow and the problem we study. Here, we are only concerned with the variational distance criterion because a weak converse for this criterion implies the same for the relative entropy criterion. 

\begin{proof}[Proof of Lemma \ref{lem:clarke_barron}]
 We first assume that $\calX$ is a finite set. At the end, we show how to relax this condition. We recall that  Clarke and Barron~\cite{clarke90,clarke94} showed for a parametric family $\{ P_{X|Z=z} \}_{z\in\calZ}$ that 
\begin{align}
&\int_\calZ  D\bigg(   P_{X|Z=z}^n \,\Big\|\, \int_\calZ  P_{X|Z=z'}^n \, \nu(\rmd  z')\bigg)\, \mu(\rmd z) \nn\\*
&= \frac{d}{2}\log\frac{n}{2\pi \rme} + D(\mu\|\nu) - D(\mu\| \mu_\rmJ) +\log C_\rmJ+o(1), \label{eqn:clarke_barron}
\end{align}
where $J_z$ is the Fisher information matrix defined in \eqref{eqn:fisher}, $\mu_\rmJ(\rmd z) 
:= \frac{1}{C_\rmJ} \mathrm{det}\sqrt{J_z} \, \rmd z$ is the so-called {\em Jeffrey's prior} \cite{clarke94} and 
$C_\rmJ := \int_\calZ \mathrm{det}\sqrt{J_z}\, \rmd z$ is the normalization factor. 
% The $o(1)$ is uniform on compact sets in the interior of the support of the prior  distribution $\mu$.  
When $\nu=\mu$, the left-hand-side of~\eqref{eqn:clarke_barron} is precisely the mutual information $I(X^n;Z)$ where the pair of random variables $(X^n,Z)$ is distributed according to $P_{X^n,Z}(x^n,z):=P_{X|Z=z}^n(x^n)\mu(z)$. See~\cite{Barron98} for an overview of approximations similar to \eqref{eqn:clarke_barron} in the context of universal source coding and   model   selection. 

For the purpose of proving the weak converse, we assume that we are given a sequence of codes $\{\calC_{\rmv,n} := ( f_{\rmv,n}, \varphi_n) \}_{n\in\bbN}$ satisfying  the condition that the error measured according to the variational distance vanishes, i.e.,
\begin{equation}
\delta_n:=\eps_\rmv^{(1)}(\calC_{\rmv,n})\to 0 ,\quad\mbox{as  } n \to\infty. \label{eqn:weak_conv}
\end{equation}
Now let the code distribution be $P_{\calC_{\rmv,n}}(x^n,z) := ( \varphi_n\cdot f_{\rmv,n}(z))(x^n) \mu(z)$ where  $ \varphi_n\cdot f_{\rmv,n}(z)$ is defined in \eqref{eqn:compose} and $( \varphi_n\cdot f_{\rmv,n}(z))(x^n) $ is the evaluation of $\varphi_n\cdot f_{\rmv,n}(z)$ at $x^n$.  Then,  the definition of the error in the visible case in~\eqref{eqn:vis_err} and~\eqref{eqn:weak_conv} implies that  the variational distance between the code distribution and the generating distribution satisfies
\begin{equation}
\left\| P_{\calC_{\rmv,n}} - P_{X^n,Z} \right\|_1 \le\delta_n, \label{eqn:delta_n1}
\end{equation}
for some sequence $\delta_n=o(1)$.
Since $|\calX|$ is finite, we can use the method of types to find a set of sufficient statistics for the data (cf.~Section \ref{sec:knomial}). Indeed, we can form a set of sufficient statistics relative to the family $\{ P_{X|Z=z} \}_{z\in\calZ}$. Let us call the sufficient statistics $G_n :\calX^n\to\bbR^{|\calX|}$. The output cardinality of $G_n $ is $|G_n(\calX^n) | = | \{ G_n(x^n): x^n\in\calX^n \}| \le (n+1)^{|\calX|-1}$ because we can take the type of $x^n$ to be the sufficient statistic, i.e., $G_n(x^n) = \mathrm{type}(x^n)$.  By the data processing inequality for  the variational distance, we have 
\begin{equation}
\left\|P_{\calC_{\rmv,n}}\cdot G_n^{-1} - P_{X^n,Z}\cdot G_n^{-1}  \right\|_1 \le\delta_n.  \label{eqn:dpi_var}
\end{equation}
In the following we use a subscript to denote the distribution of the random variables in the arguments of the mutual information functional, so for example $I_{P_{AB}}(A;B) = \sum_a P_A(a) D( P_{B|A}(\cdot |a ) \| P_B)$. Now, we notice that
\begin{align}
I_{ P_{X^n,Z} \cdot G_n^{-1}}(G_n(X^n) ; Z) & = I_{ P_{X^n,Z} }( X^n  ; Z) ,\quad\mbox{and} \label{eqn:use_ss}\\
I_{ P_{\calC_{\rmv,n}} \cdot G_n^{-1}}(G_n(X^n) ; Z) & \le I_{ P_{\calC_{\rmv,n}}  }( X^n  ; Z)  \label{eqn:use_dpi} 
\end{align}
where \eqref{eqn:use_ss} follows from the definition of sufficient statistics and~\eqref{eqn:use_dpi} follows from the data processing inequality for mutual information. In addition, by the uniform continuity of mutual information~\cite{Zhang07, Fannes73} and \eqref{eqn:dpi_var}, we have
\begin{align}
&\big|I_{ P_{X^n,Z} \cdot G_n^{-1}}(G_n(X^n) ; Z)-  I_{ P_{\calC_{\rmv,n}} \cdot G_n^{-1}}(G_n(X^n) ; Z) \big| \nn\\*
&\quad \le\delta_n \log( (n+1)^{|\calX|-1}  ) + \xi(\delta_n), \label{eqn:cont_MI}
\end{align}
where $\xi(x) :=-x\log x$. We define the upper bound between the two mutual information quantities in~\eqref{eqn:cont_MI} as $\delta_n':= \delta_n \log( (n+1)^{|\calX|-1}  ) + \xi(\delta_n)$ and note that $\delta_n'=o(\log n)$. 

Define the joint  distribution of the encoder and the parameter as $P_{f_{\rmv, n}}(y,z) := P_{f_{\rmv, n}(z)}(y) \mu(z)$ (recall the code is  visible so $f_{\rmv,n} $ has access to $z$). Consider  the mutual information between the parameter $Z$ and the memory index~$Y$, 
\begin{align}
&\log |\calY_n|   \nn\\*
&\ge I_{P_{f_{\rmv, n}}}(Y;Z)  \label{eqn:mi_bd}\\
& = \int_\calZ D\left( f_{\rmv,n}(z) \,\Big\|\,\int_\calZ f_{\rmv,n}(z') \, \mu(\rmd z')  \right)\, \mu(\rmd z )\label{eqn:def_mi}\\
& \ge \int_\calZ \! D\left( \!\varphi_n\cdot  f_{\rmv,n}(z)  \Big\| \int_\calZ \varphi_n\cdot  f_{\rmv,n}(z') \mu(\rmd z')  \right) \mu(\rmd z)\label{eqn:dpi_kl} \\
&= I_{ P_{\calC_{\rmv,n}}} (X^n;Z)\label{eqn:def_mi2} \\
&\ge I_{ P_{\calC_{\rmv,n}} \cdot G_n^{-1} } (G_n(X^n);Z) \label{eqn:dpiMI}\\
&\ge I_{ P_{X^n,Z} \cdot G_n^{-1} } (G_n(X^n);Z) - \delta_n'\label{eqn:change_dist}\\
& =  I_{ P_{X^n,Z}   } (X^n ;Z) - \delta_n' \label{eqn:use_suff_stats} \\
&= \int_\calZ D\left( P_{X|Z=z}^n \,\Big\|\, \int_\calZ P_{X|Z=z'}^n\, \mu(\rmd z' )  \right) \,\mu(\rmd z)- \delta_n'\label{eqn:def_mi3} \\
&=  \frac{d}{2}\log\frac{n}{2\pi \rme}   - D(\mu\| \mu_\rmJ) +\log C_\rmJ+o(1)- \delta_n'\label{eqn:use_cb}\\
&=\frac{d}{2}\log n + o(\log n).
\end{align}
In the above chain,  \eqref{eqn:def_mi}, \eqref{eqn:def_mi2}, and \eqref{eqn:def_mi3} follow from the definition of mutual information, \eqref{eqn:dpi_kl} follows from the data processing inequality for the relative entropy, \eqref{eqn:dpiMI} follows from the data processing inequality for mutual information,~\eqref{eqn:change_dist} follows from the uniform continuity of mutual information as stated in~\eqref{eqn:cont_MI}, \eqref{eqn:use_suff_stats} follows from the notion of sufficient statistics as seen in~\eqref{eqn:use_ss}, and  \eqref{eqn:use_cb} follows from Clarke and Barron's formula~\cite{clarke90} with $\nu=\mu$ in  \eqref{eqn:clarke_barron}.  We conclude that if a sequence of codes is such that the variational distance vanishes as in~\eqref{eqn:weak_conv}, the memory size $|\calY_n|\ge n^{ \frac{d}{2} + o(1)}$. 

Now, when $\calX$ is not a finite set, we can choose a finite disjoint partition $\{ \calS_w \}_{w\in\calW}$ of $\calX$ satisfying the following conditions: (i) $|\calW|$ is finite  and (ii) $\cup_{w\in\calW} \calS_w= \calX$. Now, we define the parametric family $P_{W|Z=z}(w) := P_{X|Z=z}(\calS_w)$. Clearly, we can   go through the above proof with the finite-support random variable $W$ in place of $X$. Now, when the code reconstructs the original family $\{ P_{X|Z=z}^n\}_{z\in\calZ}$, clearly it also reconstructs  the quantized family $\{ P_{X|W=w}^n\}_{w\in\calW}$. In essence, reconstructing the latter is ``easier'' than the former. Since \eqref{eqn:clarke_barron_res} holds for the family $\{ P_{X|W=w}^n\}_{w\in\calW}$ it must also hold for  $\{ P_{X|Z=z}^n\}_{z\in\calZ}$. This completes the proof of \eqref{eqn:clarke_barron_res}.
\end{proof}

\subsection{Strong Converse Under The Relative Entropy Criterion} \label{sec:conv_re}

In this section, we prove the following  strong converse result using the Pythagorean theorem for the relative entropy.
\begin{lemma} \label{lem:pyt}
The following lower bound holds
\begin{equation}
R_{\rmv}^{(2)}(\delta_2)\ge\frac{d}{2},\quad\forall\, \delta_2\in [0,\infty). \label{eqn:pyt_res}
\end{equation}
\end{lemma}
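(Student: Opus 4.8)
The plan is to set up a rate--distortion--type converse: de-mix the decoder output by a relative-entropy identity, apply the Pythagorean theorem of information geometry to replace each decoder distribution by a point of the model, and finish with a volume (covering) bound. Since the lemma concerns visible codes, fix $\delta_2\in[0,\infty)$ and let $\{\calC_{\rmv,n}=(f_{\rmv,n},\varphi_n)\}_{n\in\bbN}$ be an arbitrary sequence of visible codes with $\varlimsup_{n\to\infty}\eps_{\rmv}^{(2)}(\calC_{\rmv,n})\le\delta_2$. Write $M_n:=|\calY_n|$, $Q_y:=\varphi_n(y)$, $w_y(z):=\Pr\{f_{\rmv,n}(z)=y\}$, $\hat{P}_z:=\varphi_n\cdot f_{\rmv,n}(z)=\sum_y w_y(z)Q_y$ (cf.\ \eqref{eqn:compose}) and $\delta_n:=\eps_{\rmv}^{(2)}(\calC_{\rmv,n})$, so that $\varlimsup_n\delta_n\le\delta_2<\infty$. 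I will show $\varlimsup_{n\to\infty}(\log M_n)/(\log n)\ge\frac d2$ by contradiction: assuming this $\varlimsup$ is $<\frac d2$ forces $\log M_n=O(\log n)$, and I will then derive $\log M_n\ge\frac d2\log n-o(\log n)$.

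First I would de-mix. The compensation identity $\sum_y w_y(z)D(Q_y\|P)=D(\hat{P}_z\|P)+\sum_y w_y(z)D(Q_y\|\hat{P}_z)$, together with the pointwise bound $D(Q_y\|\hat{P}_z)\le\log(1/w_y(z))$ (valid because $\hat{P}_z\ge w_y(z)Q_y$ pointwise), gives $D(\hat{P}_z\|P_{X|Z=z}^n)\ge\sum_y w_y(z)D(Q_y\|P_{X|Z=z}^n)-\log M_n$; integrating against $\mu$ and using the error constraint,
\begin{equation}
\int_{\calZ}\sum_y w_y(z)\,D(Q_y\|P_{X|Z=z}^n)\,\mu(\rmd z)\le\delta_n+\log M_n .
\end{equation}
Next, for each $y$ let $\zeta_y\in\calZ$ minimize $z'\mapsto D(Q_y\|P_{X|Z=z'}^n)$ (the information projection of $Q_y$ onto the model, which exists by the boundedness in Assumption (i) and lower semicontinuity of the relative entropy). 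When $\{P_{X|Z=z}\}_{z\in\calZ}$ is an exponential family its $n$-fold product is too, $\zeta_y$ is the moment-matching (maximum-likelihood) projection, and the Pythagorean theorem \cite{Ama00} gives the exact identity $D(Q_y\|P_{X|Z=z}^n)=D(Q_y\|P_{X|Z=\zeta_y}^n)+D(P_{X|Z=\zeta_y}^n\|P_{X|Z=z}^n)$ for all $z$, whence
\begin{equation}
D(Q_y\|P_{X|Z=z}^n)\ge D(P_{X|Z=\zeta_y}^n\|P_{X|Z=z}^n)=n\,D(P_{X|Z=\zeta_y}\|P_{X|Z=z}).
\end{equation}
For a general family this identity holds only asymptotically: by Assumption (ii) and the postulated compact convergence of \eqref{eqn:euc_re}, after the $\sqrt n$-rescaling about any interior point the model is a Gaussian location (hence exponential) family, so the inequality holds up to an additive error that is $o(1)$ uniformly on compact sets; this error is carried along implicitly.

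The remaining ingredients are geometric. Since $\mu$ is absolutely continuous with respect to Lebesgue measure and $\calZ$ has positive Lebesgue measure (Assumption (i)), pick a compact $\calZ_0$ contained in the interior of $\calZ$ on which the $\mu$-density is $\ge c_0>0$. From \eqref{eqn:euc_re} for pairs with $\|\zeta-z\|$ small, and from strict positivity of $D(P_{X|Z=\zeta}\|P_{X|Z=z})$ for $\zeta\ne z$ (identifiability) combined with lower semicontinuity and compactness of $\bar{\calZ}\times\calZ_0$ for the remaining pairs, there is $c_4>0$ with $D(P_{X|Z=\zeta}\|P_{X|Z=z})\ge c_4\|\zeta-z\|^2$ for all $z\in\calZ_0$, $\zeta\in\bar{\calZ}$. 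Combining this with the two displays and $\sum_y w_y(z)\|\zeta_y-z\|^2\ge\mathrm{dist}(z,\{\zeta_y\})^2$ gives $c_4c_0\,n\int_{\calZ_0}\mathrm{dist}(z,\{\zeta_y\})^2\,\rmd z\le\delta_n+\log M_n+o(1)$. A standard volume argument --- the $r$-neighbourhood of the $M_n$ points $\{\zeta_y\}$ has Lebesgue measure $O(M_n r^d)$, so at least half of $\calZ_0$ lies at distance $\gtrsim M_n^{-1/d}$ from $\{\zeta_y\}$ --- yields $\int_{\calZ_0}\mathrm{dist}(z,\{\zeta_y\})^2\,\rmd z\ge c_6 M_n^{-2/d}$, and hence
\begin{equation}
M_n^{2/d}\ge\frac{c_7\, n}{\delta_n+\log M_n+o(1)}
\end{equation}
for positive constants $c_6,c_7$ independent of $n$. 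Under the contradiction hypothesis $\log M_n=O(\log n)$ and $\delta_n=O(1)$, so the denominator is $O(\log n)$, and the last display forces $\log M_n\ge\frac d2\log n-\frac d2\log\log n+O(1)$, i.e.\ $\varliminf_n(\log M_n)/(\log n)\ge\frac d2$ --- a contradiction. Thus every such sequence of visible codes satisfies $\varlimsup_n(\log M_n)/(\log n)\ge\frac d2$, and taking the infimum over admissible sequences gives \eqref{eqn:pyt_res}.

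I expect the main obstacle to be the Pythagorean step for non-exponential families: one must rigorously localize to the Gaussian (exponential) model so that the projection inequality $D(Q_y\|P_{X|Z=z}^n)\ge D(P_{X|Z=\zeta_y}^n\|P_{X|Z=z}^n)$ survives with a uniformly vanishing error --- this is exactly where Assumption (ii), and crucially its compact-convergence clause, is used. The geometric steps (the uniform quadratic lower bound and the volume estimate, with $n$-independent constants) are routine once the boundedness of $\calZ$ and the absolute continuity of $\mu$ are exploited.
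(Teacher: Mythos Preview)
Your proposal takes essentially the same route as the paper: the paper applies the Pythagorean (compensation) identity \eqref{eqn:pyt} to write $\eps_\rmv^{(2)}$ as the difference of two terms, bounds the second by $I(X^n;Y|Z)\le\log M_n$ (exactly your $\sum_y w_y(z)D(Q_y\|\hat P_z)\le\log M_n$ step), and then lower-bounds the first via a sphere-packing argument in $\calZ$ to obtain $\Omega(n^\epsilon)$ when $M_n=n^{\frac{d}{2}(1-\epsilon)}$.

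Your version is in fact more careful in one respect. The paper writes ``let the closest distribution in $\calS_{n,\epsilon}$ have parameter $z'\in\calZ$'' and then invokes $D(P_{X|Z=z'}^n\|P_{X|Z=z}^n)=\Omega(n\|z-z'\|^2)$, tacitly treating each decoder output $\varphi_n(y)$ as a model point. You make this step explicit via the $m$-projection $\zeta_y$ and the Pythagorean inequality $D(Q_y\|P_{X|Z=z}^n)\ge D(P_{X|Z=\zeta_y}^n\|P_{X|Z=z}^n)$, which is exactly what is needed to justify the paper's move. Your identified obstacle for non-exponential families is genuine, and the paper's proof shares the same gap without acknowledging it; your localization sketch via Assumption~(ii) is the natural patch, though neither you nor the paper carries it out in full.
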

This proves the lower bound to \eqref{eqn:vis2}.  The proof hinges on the Pythagorean formula for the relative entropy and a  geometric argument also contained in Rissanen's work~\cite{Rissanen84}.
\begin{proof}[Proof of Lemma \ref{lem:pyt}]
Given probability measures $\{P_i\}_{i\in\calI}$ and $Q$, and a probability mass function $\{p_i\}_{i\in\calI}$, the Pythagorean formula for relative entropy~\cite{Ama00} states that 
\begin{equation}
\!\sum_{i\in\calI} p_i D(P_i \| Q) \!=\! D  \bigg( \sum_{i\in\calI} p_i P_i \Big\| Q\bigg) \!+\! \sum_{i\in\calI} p_i D\bigg( P_i  \Big\| \sum_{j\in\calI} p_j P_j\bigg) . \label{eqn:pyt}
\end{equation}
In the following, we show that if the memory size $|\calY_n|$ is too small, say $n^{ \frac{d}{2}(1-\epsilon)}$ for some fixed $\epsilon>0$, then the error $\eps_{\rmv}^{(2)} (\calC_{\rmv,n})$ tends to infinity as $n$ grows. 

Consider any code $\calC_{\rmv,n}= (f_{\rmv,n},\varphi_n)$ with memory size $|\calY_n| =n^{ \frac{d}{2}(1-\epsilon)}$.  Let $P_{f_{\rmv,n}(z)}(y)  = \Pr\{ f_{\rmv,n}(z)=y\}$ be the probability that the index in the memory $Y\in\calY_n$ takes on the value $y$ when the parameter is $z\in\calZ$ under the random encoder  mapping $f_{\rmv,n}$. Then an application of the  Pythagorean theorem  in \eqref{eqn:pyt} yields
\begin{align}
&D\bigg( \sum_{y\in\calY_n} P_{f_{\rmv,n}(z)}(y) \varphi_n(y) \bigg\| P_{X|Z=z}^n \bigg) \nn\\*
& =\sum_{y\in\calY_n} P_{f_{\rmv,n}(z)}(y) D( \varphi_n(y) \| P_{X|Z=z}^n)  \nn\\*
&\quad- \sum_{y\in\calY_n} \! \! P_{f_{\rmv,n}(z)}(y)  D\bigg(\!  \varphi_n(y) \bigg\| \sum_{y'\in\calY_n} \!\!  P_{f_{\rmv,n}(z)}(y')\varphi_n(y')  \bigg). \label{eqn:apply_pyt}
\end{align}
Hence, by integrating  \eqref{eqn:apply_pyt} over all $z\in\calZ$, we obtain
\begin{align}
&\eps_\rmv^{(2)} ( f_{\rmv,n},\varphi_n)\nn\\*
 & = \int_\calZ D\big( \varphi_n\cdot f_{\rmv,n}(z)  \big\|P_{X|Z=z}^n \big) \, \mu(\rmd z)\\
& = \int_\calZ D\bigg( \sum_{y\in\calY_n} P_{f_{\rmv,n}(z)}(y)  \varphi_n(y) \bigg\|P_{X|Z=z}^n \bigg) \, \mu(\rmd z)\\
&=\int_\calZ \sum_{y\in\calY_n} P_{f_{\rmv,n}(z)}(y)  \Bigg[ D\big(\varphi_n(y) \big\| P_{X|Z=z}^n\big)   \nn\\*
%&\quad - \int_\calZ \sum_{y\in\calY_n} P_{f_{\rmv,n}(z)}(y) D\bigg( \varphi_n(y) \bigg\| \sum_{y'\in\calY_n} P_{f_{\rmv,n}(z)}(y')\varphi_n(y')  \bigg) . \label{eqn:apply_pyt2}
&\quad -   D\bigg( \varphi_n(y) \bigg\| \sum_{y'\in\calY_n} P_{f_{\rmv,n}(z)}(y')\varphi_n(y')  \bigg)\Bigg]\, \mu(\rmd z) . \label{eqn:apply_pyt2}
\end{align}
We analyze both terms in  \eqref{eqn:apply_pyt2} in turn. 

For the first term, we use an argument similar to that for the proof of Theorem 1(a) in Rissanen~\cite{Rissanen84}. Note that since the size of $|\calY_n|$ is $n^{ \frac{d}{2}(1-\epsilon)}$, the set $\calS_{n,\epsilon}$ of all possible distributions output by the decoder $\varphi_n$ cannot exceed $n^{ \frac{d}{2}(1-\epsilon)}$, i.e., $|\calS_{n,\epsilon}|=|\{ \varphi_n(y):y\in\calY_n\}|\le |\calY_n|= n^{ \frac{d}{2}(1-\epsilon)}$. For any given $z\in\calZ$, let the closest distribution in $\calS_{n,\epsilon}$ have parameter $z' \in\calZ$. Since $\calZ\in\bbR^d$ is  bounded, we can estimate the (order of the) $\ell_2$ distance  between $z$ and $z'$, i.e., $\Delta:=\| z-z'\|$. If $z$ is a point in general position in $\calZ$, then $\Delta$ is of the same order as $r$, where $r$ is the largest radius of the $n^{ \frac{d}{2}(1-\epsilon)}$ disjoint spheres   contained in $\calZ$. Since the volume spheres of radius $r$ in $\bbR^d$ is proportional to $r^d$, we have that 
\begin{equation}
 K_d\cdot r^d \cdot n^{ \frac{d}{2}(1-\epsilon)} \ge \mathrm{vol}(\calZ),
\end{equation}
where $K_d$ is a constant that depends only on the dimension $d$. 
Since $ \mathrm{vol}(\calZ)$ does not depend on $n$ (it also only depends on $d$) and $\Delta=\Theta( r)$,\footnote{The implied constants in the $\Omega(\fndot)$ notations used in \eqref{eqn:Delta}, \eqref{eqn:first_term}, and~\eqref{eqn:eps_to_inf}   are all assumed to be positive.}
\begin{equation}
\Delta=\Omega(n^{-  \frac{1}{2} (1- \epsilon)}). \label{eqn:Delta}
\end{equation}
At the same time, by the Euclidean approximation of relative entropy in \eqref{eqn:euc_re},  $D( P_{X|Z=z'}^n\| P_{X|Z=z }^n) = \Omega( n \| z-z'\|^2)=\Omega( n^{ \epsilon})$. Thus the first term in \eqref{eqn:apply_pyt2} scales as 
\begin{equation}
 \!\!\!\!\!\int_\calZ \sum_{y\in\calY_n}\!\! P_{f_{\rmv,n}(z)}(y) D\big(\varphi_n(y) \big\| P_{X|Z=z}^n\big) \, \mu(\rmd z) =\Omega( n^{ \epsilon}).  \label{eqn:first_term}
\end{equation}

On the other hand the second term in \eqref{eqn:apply_pyt2} is a  conditional mutual information. In particular, it can be upper bounded as
\begin{align}
&  \int_\calZ \sum_{y\in\calY_n}\! P_{f_{\rmv,n}(z)}(y) D\bigg(\!  \varphi_n(y) \bigg\| \! \sum_{y'\in\calY_n} \! \! \! P_{f_{\rmv,n}(z)}(y')\varphi_n(y')  \bigg)\mu(\rmd z)   \nn\\*
  &= I(X^n;Y|Z)\le H(Y)  \le   \frac{d}{2}(1-\epsilon) \log n.\label{eqn:sec_term}
\end{align}
Note that the random variables in the    information quantities above are computed with respect to the distribution induced by the visible code $\calC_{\rmv,n} = (f_{\rmv,n},\varphi_n)$.%, i.e., $P_{\calC_{\rmv,n}}(x^n,y,z) :=  \mu(z) (P_{f_{\rmv,n}(z)}(y) \varphi(y))(x^n)$. 

Combining  \eqref{eqn:apply_pyt2}, \eqref{eqn:first_term}, and \eqref{eqn:sec_term}, we obtain
\begin{equation}
\eps_\rmv^{(2)} ( f_{\rmv,n},\varphi_n)\ge \Omega(n^{ \epsilon})- \frac{d}{2}(1-\epsilon) \log n\to \infty. \label{eqn:eps_to_inf}
\end{equation}
Hence, with a memory size of $n^{ \frac{d}{2}(1-\epsilon)}$, the error computed according to the relative entropy criterion for any visible code     diverges. This completes the proof of  \eqref{eqn:pyt_res}.
\end{proof}
\subsection{Strong Converse Under The Variational Distance Criterion} \label{sec:str_conv}

In this section, we prove the following strong converse statement with respect to the variational distance error criterion.
\begin{lemma} \label{lem:str}
The following lower bound holds
\begin{equation}
R_{\rmv}^{(1)}(\delta_1)\ge\frac{d}{2},\quad\forall\, \delta_1\in [0,2).  \label{eqn:str_conv_res}
\end{equation}
\end{lemma}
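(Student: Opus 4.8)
The plan is to prove the contrapositive: if the memory grows strictly more slowly than $n^{d/2}$, the average variational distance cannot stay below $2$. Suppose, for contradiction, that $R_{\rmv}^{(1)}(\delta_1)=c<\frac d2$ for some $\delta_1\in[0,2)$. Then there is a sequence of visible codes $\calC_{\rmv,n}=(f_{\rmv,n},\varphi_n)$ with $M_n:=|\calY_n|\le n^{c'}$ for a fixed $c'\in(c,\frac d2)$ and $\varlimsup_n\eps_{\rmv}^{(1)}(\calC_{\rmv,n})\le\delta_1$, so that the average error is at most some fixed $\bar\delta<2$ for all large $n$. Write $Q_z:=\varphi_n\cdot f_{\rmv,n}(z)$, and recall the overlap identity $\mathrm{ov}(P,R):=\sum\min(P,R)=1-\tfrac12\|P-R\|_1=\inf_A\{P(A)+R(A^c)\}$. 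Set $\gamma:=\tfrac12(2-\bar\delta)>0$. Applying Markov's inequality to $\|Q_z-P_{X|Z=z}^n\|_1\in[0,2]$, the ``good set'' $G_n:=\{z\in\calZ:\|Q_z-P_{X|Z=z}^n\|_1\le 2-\gamma\}$ has $\mu(G_n)\ge\beta$ for a constant $\beta>0$ and all large $n$; equivalently, $\mathrm{ov}(Q_z,P_{X|Z=z}^n)\ge\gamma/2$ for every $z\in G_n$.

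Next I would use local asymptotic normality (Assumption (iv)) to build a suitable ``typical set.'' Let $\hatz_{\mathrm{ML}}(x^n)$ be the MLE and, for a constant $t$ to be fixed later, put $\mathrm{Typ}_z:=\{x^n:\sqrt n\,\|J_z^{1/2}(\hatz_{\mathrm{ML}}(x^n)-z)\|\le t\}$. By \eqref{eqn:local_asymp_norm} with $z'=0$, the law of $h_z(X^n)$ under $P_{X|Z=z}^n$ converges in variational distance to $\phi^{(d)}$, hence $P_{X|Z=z}^n(\mathrm{Typ}_z^c)\to\Pr(\|N(0,I_d)\|>t)\le d/t^2$, and (invoking uniformity of this convergence on the bounded set $\calZ$, as with the compact convergence already assumed for \eqref{eqn:euc_re}) one can fix $t$ so that $P_{X|Z=z}^n(\mathrm{Typ}_z)\ge 1-\gamma/4$ for all $z\in\calZ$ and all large $n$. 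The key structural step is this: for $z\in G_n$, the overlap identity applied with $A=\mathrm{Typ}_z$ gives $Q_z(\mathrm{Typ}_z)\ge\mathrm{ov}(Q_z,P_{X|Z=z}^n)-P_{X|Z=z}^n(\mathrm{Typ}_z^c)\ge\gamma/4$; and since $Q_z=\sum_{y}\Pr\{f_{\rmv,n}(z)=y\}\,\varphi_n(y)$ is a mixture, some index $y(z)\in\calY_n$ must satisfy $\varphi_n(y(z))(\mathrm{Typ}_z)\ge\gamma/4$. Consequently $G_n\subseteq\bigcup_{y\in\calY_n}V_y$, where $V_y:=\{z\in\calZ:\varphi_n(y)(\mathrm{Typ}_z)\ge\gamma/4\}$.

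It then remains to bound $\mathrm{Leb}(V_y)$ by a ``volume-counting'' argument. Since the Fisher information is uniformly positive definite on $\calZ$, say $J_z\succeq\lambda_0 I$ with $\lambda_0>0$, membership $x^n\in\mathrm{Typ}_z$ forces $\|z-\hatz_{\mathrm{ML}}(x^n)\|\le t/(\sqrt n\,\lambda_0^{1/2})$, so for each fixed $x^n$ the slice $\{z:x^n\in\mathrm{Typ}_z\}$ has Lebesgue measure at most $\omega_d(t/\lambda_0^{1/2})^d n^{-d/2}=:C_\gamma n^{-d/2}$. By Fubini, $\tfrac\gamma4\mathrm{Leb}(V_y)\le\int_{V_y}\varphi_n(y)(\mathrm{Typ}_z)\,\rmd z=\int\!\big(\int_{V_y}\bone\{x^n\in\mathrm{Typ}_z\}\,\rmd z\big)\,\varphi_n(y)(\rmd x^n)\le C_\gamma n^{-d/2}$, i.e.\ $\mathrm{Leb}(V_y)=O(n^{-d/2})$ uniformly in $y$. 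Hence $\mathrm{Leb}(G_n)\le\sum_{y}\mathrm{Leb}(V_y)\le M_n\cdot O(n^{-d/2})=O(n^{c'-d/2})\to0$, and since $\mu\ll\mathrm{Leb}$ this forces $\mu(G_n)\to0$, contradicting $\mu(G_n)\ge\beta$. This establishes $R_{\rmv}^{(1)}(\delta_1)\ge\frac d2$ for all $\delta_1\in[0,2)$; combined with the direct part (Lemma \ref{lem:gen}) and \eqref{eqn:visible_blind}, it also yields \eqref{eqn:vis1} and \eqref{eqn:blind1}. Note that only (i), (ii) and (iv) enter this converse.

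The hard part will be the structural step in the second paragraph. The subtlety is that the randomized encoder composed with a mixing decoder lets the reconstruction $Q_z$ be a genuine convex combination of the decoder's atoms, so \emph{no single} $\varphi_n(y)$ need be close to $P_{X|Z=z}^n$ in total variation (a uniform mixture of many well-separated $P^n$'s can sit at distance arbitrarily close to $2$ from each of them); the right replacement is the much weaker — but sufficient — assertion that $\mathrm{ov}(Q_z,P^n_{X|Z=z})\ge\gamma/2$ already forces one atom to place mass $\ge\gamma/4$ on the $\mathrm{Typ}_z$, a set whose parameter-``volume'' is only $O(n^{-d/2})$. The remaining care is bookkeeping: ensuring that the concentration bound $P_{X|Z=z}^n(\mathrm{Typ}_z)\ge1-\gamma/4$ and the multiplicity bound $\mathrm{Leb}\{z:x^n\in\mathrm{Typ}_z\}=O(n^{-d/2})$ both hold uniformly in $z$ \emph{with a single} choice of the cutoff $t$, which is exactly where uniform local asymptotic normality and the uniform positive-definiteness of $J_z$ are used.
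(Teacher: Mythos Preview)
Your proof is correct and takes a genuinely different route from the paper's. Both arguments start identically: Markov's inequality extracts a ``good set'' $G_n$ (the paper calls it $\calS$) of parameters with $\mu$-measure bounded away from zero on which the reconstruction has nontrivial overlap with $P_{X|Z=z}^n$. From there they diverge. The paper extracts $\Theta(M_n)$ well-separated \emph{discrete} points $z_i\in\calS$, uses distinguishability (via a Chebyshev-type concentration bound on $\frac{1}{n}\sum_j X_j$) to build pairwise \emph{disjoint} high-probability sets $\calD_i\subset\calX^n$, and runs a pure counting argument: since $\sum_{j=1}^{M_n}\varphi_n(j)\big(\bigcup_i\calD_i\big)\le M_n$ while each $\calD_i$ collects total mass at least $\alpha/4-\epsilon$ from the $M_n$ decoder atoms, one obtains $M_n\ge\frac{5}{\alpha}M_n(\alpha/4-\epsilon)$, a contradiction. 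You instead stay continuous: you cover $G_n$ by the sets $V_y=\{z:\varphi_n(y)(\mathrm{Typ}_z)\ge\gamma/4\}$ and bound $\mathrm{Leb}(V_y)$ via Fubini, using that for each fixed $x^n$ the parameter-slice $\{z:x^n\in\mathrm{Typ}_z\}$ has volume $O(n^{-d/2})$.

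The payoff of your approach is that it handles general $d$ in one stroke and the Fubini step is clean; the paper treats $d=1$ and then asserts ``we can treat each dimension separately,'' which is informal. The cost is that you need uniform-in-$z$ convergence in the LAN statement and a uniform spectral bound $J_z\succeq\lambda_0 I$, neither of which is literally among Assumptions (i)--(v) as written (LAN in \eqref{eqn:local_asymp_norm} is stated pointwise); you correctly flag both caveats. The paper's version is more elementary---it needs only a Chebyshev bound rather than LAN---but its $\calD_i$ construction as written implicitly assumes a mean-parametrization on $\calX\subset\bbR$, so it too is leaning on unstated regularity. Your observation that only (i), (ii), (iv) enter (and really (ii) only by analogy for uniformity) is sharper than the paper's accounting.
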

Lemma \ref{lem:str} significantly strengthens Lemma \ref{lem:clarke_barron} because the asymptotic error $\delta_1$ is not restricted to be $0$; rather it can take any value in $[0,2)$. It demonstrates the lower bound to  \eqref{eqn:vis1}.

\begin{proof}[Proof of Lemma \ref{lem:str}]
We first consider the case in which $d=1$. We proceed  by contradiction.
We assume, without loss of generality,  that the parameter space $\calZ=[0,1]$.
Fix $\eta \in (0,1/2)$ and assume that the memory size $M_n= |{\cal Y}_n|$ is $O(n^{\frac{1}{2}-\eta})$
and 
\begin{align}
\eps^{(1)}_{\rmv }(f_{\rmv,n},\varphi_n) &:=\bbE_{z\sim \mu} \left[
\big\| \varphi_n\cdot  f_{\rmv,n}(z)- P_{X|Z=z}^n
\big\|_1 \right]   \nn\\*
%&:= \int_\calZ \big\| \varphi_n\cdot  f_{\rmv,n}(z)- P_{X|Z=z}^n
%\big\|_1 \, \mu(\rmd z)\\
 &  \le 2-\alpha
\end{align}
for some $\alpha \in (0,2)$ and $n$ large enough. Let 
\begin{equation}
 \calS:=\left\{ z \in\!\calZ :
\big\| \varphi_n\cdot  f_{\rmv,n}(z)\!-\! P_{X|Z=z}^n
\big\|_1
 \le 2-\frac{\alpha}{2}  \right\}. \!
\end{equation}
Markov's inequality implies that
\begin{align}
\mu(\calS)
&\ge 1- \frac{\bbE_{z\sim \mu} \big[
 \| \varphi_n\cdot  f_{\rmv,n}(z)- P_{X|Z=z}^n
 \|_1 \big]}{2-\frac{\alpha}{2}}\\
& \ge 1- \frac{2-\alpha}{2-\frac{\alpha}{2}} =  \frac{\alpha}{4-\alpha}>0. \label{eqn:muge0}
\end{align}
Let $\lambda$ be the Lebesgue measure on $[0,1]$. From \eqref{eqn:muge0}, we know that $\lambda(\calS)>0$ by absolute  continuity of $\mu$ with respect to $\lambda$ (see Section~\ref{sec:setup}). Thus, we may choose $\frac{5}{\alpha}M_n$ points $\{z_i : i = 1,\ldots,\frac{5}{\alpha} M_n\} \subset\calZ$ satisfying the following two conditions:
\begin{align}
\big\| \varphi_n\cdot  f_{\rmv,n}(z_i)- P_{X|Z=z_i}^n \big\|_1
 & \le 2-\frac{\alpha}{2}\label{H1} , \quad\mbox{and}\\
\forall\, i\ne j , \quad |z_i-z_j| & > { \lambda(\calS)}    \Big(   \frac{5}{\alpha}M_n   \Big)^{-1},
\end{align}
Since $ {\lambda(\calS)  }  \big(   \frac{5}{\alpha}M_n   \big)^{-1}=\Omega(n^{-\frac{1}{2}+\eta})$,
the distributions in the set $\{P_{X|Z=z_i}^n :  i = 1,\ldots,\frac{5}{\alpha} M_n\} \subset\calP(\calX^n)$  are {\em distinguishable}. That is,
  for any $\epsilon  >0$ we may choose an $N \in\bbN$ satisfying the following.
For any $n \ge N$, there exists disjoint subsets $\calD_i \subset {\cal X}^n$
such that
\begin{align}
P_{X |Z=z_i}^n(\calD_i) \ge 1-\epsilon \label{H2}
\end{align}
for any $i=1, \ldots, \frac{5}{\alpha}M_n$. For example, we may take
\begin{equation}
\calD_i \!:=\! \bigg\{\! x^n \!\in\!\calX^n :\! \Big| \frac{1}{n}\sum_{j=1}^n x_j - z_i  \Big| \!\le\! \frac{\lambda(\calS)}{3} \cdot \Big(   \frac{5}{\alpha}M_n   \Big)^{-1} \bigg\}
\end{equation}
and it is then easy to verify that $\{\calD_i :i=1,\ldots, \frac{5}{\alpha}M_n\}$ are disjoint and, by Chebyshev's inequality, that  \eqref{H2} holds for $n$ large enough.
Now recall that for any two probability measures $P,Q$ on a common probability  space $(\Omega,\scF)$, half the variational distance can be expressed as $\frac{1}{2} \| P-Q\|_1=\sup\{ P(\calA)-Q(\calA) : \calA\in\scF\}$. 
Thus, the combination of~\eqref{H1} and~\eqref{H2} shows that
\begin{align}
 1-\frac{\alpha}{4}&\ge \big(\varphi_n\cdot  f_{\rmv,n}(z_i)\big)(\calD_i^c)- P_{X|Z=z_i}^n(\calD_i^c)\\
 &\ge \big(\varphi_n\cdot  f_{\rmv,n}(z_i)\big)(\calD_i^c)-\epsilon.
\end{align}
In other words,
\begin{align}
\big( \varphi_n\cdot  f_{\rmv,n}(z_i)\big)
(\calD_i) \ge \frac{\alpha}{4}-\epsilon \label{H3}.
\end{align}
We denote the elements of ${\cal Y}_n$ by $\{1, \ldots, M_n\}$.
The distribution  at the output of the decoder $\varphi_n\cdot  f_{\rmv,n}(z)$ is a convex combination of
$\{\varphi_n(1), \ldots, \varphi_n(M_n)\}$. Thus, %$\big( \varphi_n\cdot  f_{\rmv,n}(z_i)\big)(x^n) =\sum_{j=1}^{M_n}q_j \cdot\big(\varphi_n(j) \big) (x^n)\le\sum_{j=1}^{M_n} \big(\varphi_n(j) \big) (x^n)$ for any probability mass function $\{q_j\}$. Consequently,
\begin{align}
\sum_{j=1}^{M_n}\big(\varphi_n(j)\big)(\calD_i)
\ge\big( \varphi_n\cdot  f_{\rmv,n}(z_i)\big)(\calD_i) . \label{eqn:cvx_comb}
\end{align}
Hence,
\begin{align}
M_n 
&\ge \sum_{j=1}^{M_n}\big(\varphi_n(j) \big)\bigg(
 \bigcup_{i=1}^{\frac{5}{\alpha}M_n} \calD_i\bigg)\\
&= \sum_{i=1}^{\frac{5}{\alpha}M_n} \sum_{j=1}^{M_n}\big(\varphi_n(j)\big)(\calD_i)\\
&\ge \sum_{i=1}^{\frac{5}{\alpha}M_n}\big(  \varphi_n\cdot  f_{\rmv,n}(z_i)\big)(\calD_i)\label{eqn:H4}\\
&\ge \sum_{i=1}^{\frac{5}{\alpha}M_n} \Big(\frac{\alpha}{4}-\epsilon \Big)= \frac{5}{\alpha}M_n  \Big(\frac{\alpha}{4}-\epsilon\Big),\label{eqn:H5} 
\end{align}
where \eqref{eqn:H4}  and the inequality in~\eqref{eqn:H5} are applications of   the bounds in \eqref{eqn:cvx_comb} and  \eqref{H3} respectively. 
So, we obtain
\begin{align}
1\ge\frac{5}{\alpha} \Big(  \frac{\alpha}{4}-\epsilon\Big),
\end{align}
which is a contradiction (if $\epsilon>0$ is chosen to be smaller than $\frac{ \alpha}{20}$). Hence, a memory size of $|\calY_n|=O(n^{\frac{1}{2}-\eta})$ is insufficient to ensure that $\varlimsup_{n\to\infty}\eps_{\rmv}(f_{\rmv,n}, \varphi_n)$  is strictly smaller than $2$.

In the general case in which we assume for the sake of contradiction that when the memory size is $|\calY_n| = O(n^{ d (\frac{1 }{2}-\eta)})$ (for fixed $\eta>0$), per dimension, the memory size is of the order  $O(n^{\frac{1}{2} -\eta})$. Now, we can treat each dimension separately and apply the above argument to yield the same contradiction when the memory size is too small.
\end{proof}

\section{Discussion and Future Research Directions} \label{sec:discuss}
In this paper, we have considered the approximate reconstruction of a generating distribution $P_{X|Z=z}^n$ from a compressed version of a source $X^n$ (the blind setting) or the parameter of generating distribution $z$   itself (the visible setting). We have shown using various notions of approximate sufficient statistics that  when $n$  i.i.d.\ observations $X^n$ are available, the length of the optimal code in most cases and under suitable regularity conditions  is $\frac{d}{2}\log n + o(\log n) $. In the process of deriving our results, we have strengthened the achievability part based on Rissanen's MDL principle \cite{Rissanen83, Rissanen84}. We have also proved strong converses, thus strengthening the utility of Clarke and Barron's formula~\cite{clarke90, clarke94}, which, by itself, only leads to a weak converse under the variational distance error criterion. 

There are three promising avenues for future research:
\begin{enumerate}
\item It is natural to question whether the assumption of $\{ P_{X|Z=z} \}_{z\in\calZ}$ being an {\em exponential family} is necessary to achieve $R_{\rmb}^{ (2) }(\delta_2')   = \frac{d}{2},$  for all $\delta_2' \in [0,\infty)$ in  \eqref{eqn:blind2_exp} (i.e., asymptotically zero error). We would like to remove this somewhat restrictive assumption if possible. 
\item It is also natural to wonder about the scaling and form of the {\em second-order term} in the optimal code length   $\log |\calY_n|$. It is known from Theorem \ref{thm:vis} that, in most cases, the first-order term scales as $\frac{d}{2}\log n$. Our achievability results  based on Lemmas \ref{lem:mdl}--\ref{lem:gen}  suggest that the second-order term is of the constant order $O(1)$. Establishing that this is indeed a constant and the dependence of this constant as a function of $\delta \ge 0$, the bound on the error, would be of significant theoretical and practical interest. 
\item Since there are many distance ``metrics'' that may be used for measuring   distances between two distributions (e.g., Csisz\'ar's $f$-divergences) \cite{Liese06}, it may also be fruitful to study the asymptotics of the optimal code length  $\log |\calY_n|$ when other distance measures beyond the relative entropy and variational distance are used to quantify the discrepancy between $P_{X|Z=z}^n$ and $\varphi_n\cdot f_{\rmv,n}\cdot P_{X|Z=z}^n$ (in the blind setting).
\end{enumerate}
\appendices

\subsection*{Acknowledgements}   
The authors thank the Associate Editor Dr.\ Peter Harremo\"es and the two reviewers for their insightful comments that helped to  improve the paper significantly.

\bibliographystyle{unsrt}
\bibliography{isitbib}

\begin{thebibliography}{10}

\bibitem{HayashiTan17}
M.~Hayashi and V.~Y.~F. Tan.
\newblock Minimum rates of approximate sufficient statistics.
\newblock In {\em Proc. of IEEE Intl. Symp. on Inform. Th.}, pages 3035--3039,
  Aachen, Germany, 2017.

\bibitem{Lehmann98}
E.~L. Lehmann.
\newblock {\em Theory of Point Estimation}.
\newblock Springer, 2nd edition, 1998.

\bibitem{Cov06}
T.~M. Cover and J.~A. Thomas.
\newblock {\em Elements of Information Theory}.
\newblock Wiley-Interscience, 2nd edition, 2006.

\bibitem{Wai08}
M.~J. Wainwright and M.~I. Jordan.
\newblock {Graphical Models, Exponential Families, and Variational Inference}.
\newblock {\em Foundations and Trends$\,$\textregistered $ $ in Machine
  Learning}, 1(1--2):1--305, 2008.

\bibitem{Csi97}
I.~Csisz\'{a}r and J.~{K\"{o}rner}.
\newblock {\em Information Theory: Coding Theorems for Discrete Memoryless
  Systems}.
\newblock Cambridge University Press, 2011.

\bibitem{vanderVaart}
A.W. {van der Vaart}.
\newblock {\em Asymptotic statistics}.
\newblock Cambridge University Press, 1998.

\bibitem{LeCam}
L.~{Le Cam}.
\newblock {\em Asymptotic Methods in Statistical Decision Theory}.
\newblock Asymptotic Methods in Statistical Decision Theory, 1986.

\bibitem{LeCam60}
L.~{Le Cam}.
\newblock Locally asymptotically normal families of distributions.
\newblock {\em University of California Publications in Statistics}, 3:37--98,
  1960.

\bibitem{Shannon48}
C.~E. Shannon.
\newblock A mathematical theory of communication.
\newblock {\em The Bell System Technical Journal}, 27:379--423, 1948.

\bibitem{Shannon59b}
C.~E. Shannon.
\newblock Coding theorems for a discrete source with a fidelity criterion.
\newblock {\em IRE Nat. Conv. Rec.}, pages 142--163, 1959.

\bibitem{Schumacher95}
B.~Schumacher.
\newblock Quantum coding.
\newblock {\em Phys. Rev. A}, 51(4):2738--2747, Apr 1995.

\bibitem{Tishby99}
N.~Tishby, F.~C. Pereira, and W.~Bialek.
\newblock The information bottleneck method.
\newblock In {\em Proc. of Allerton Conference}, pages 368--377, Monticello,
  IL, 1999.

\bibitem{Chechik05}
G.~Chechik, A.~Globerson, N.~Tishby, and Y.~Weiss.
\newblock Information bottleneck for gaussian variables.
\newblock {\em Journal of Machine Learning Research}, 6:165--188, May 2005.

\bibitem{Harremores07}
P.~Harremo\"es and N.~Tishby.
\newblock The information bottleneck revisited or how to choose a good
  distortion measure.
\newblock In {\em Proc. of IEEE Intl. Symp. on Inform. Th.}, pages 566--570,
  Nice, France, 2007.

\bibitem{Yang16}
Y.~Yang, G.~Chiribella, and M.~Hayashi.
\newblock Optimal comrpession for identically prepared qubit states.
\newblock {\em Phys. Rev. Lett.}, 117(9):090502, Aug 2016.

\bibitem{Petz86}
D.~Petz.
\newblock Sufficient subalgebras and the relative entropy of states of a {von
  Neumann} algebra.
\newblock {\em Comm. Math. Phys.}, 105(1):123--131, 1986.

\bibitem{Koashi}
M.~Koashi and N.~Imoto.
\newblock Compressibility of quantum mixed-state signals.
\newblock {\em Phys. Rev. Lett.}, 87(1):017902, 2001.

\bibitem{HayashiBook2017}
M.~Hayashi.
\newblock {\em Quantum Information Theory: A Mathematical Foundation}.
\newblock Graduate Texts in Physics, Springer, 2nd edition, 2017.

\bibitem{sutter16}
D.~Sutter, O.~Fawzi, and R.~Renner.
\newblock Universal recovery map for approximate {Markov} chains.
\newblock {\em Proceedings of the Royal Society Series A, Mathematical,
  Physical and Engineering Sciences}, 472:2186, 2016.

\bibitem{fawzi15}
O.~Fawzi and R.~Renner.
\newblock Quantum conditional mutual information and approximate {Markov}
  chains.
\newblock {\em Comm. Math. Phys.}, 340(2):575--611, 2015.

\bibitem{Rissanen83}
J.~Rissanen.
\newblock A univeral prior for integers and estimation by minimum description
  length.
\newblock {\em Annals of Statistics}, 11(2):416--431, 1983.

\bibitem{Rissanen84}
J.~Rissanen.
\newblock Universal coding, information, prediction, and estimation.
\newblock {\em IEEE Trans. on Inform. Theory}, 30(4):629--636, 1984.

\bibitem{merhav95}
N.~Merhav and M.~Feder.
\newblock A strong version of the redundancy-capacity theorem of universal
  coding.
\newblock {\em IEEE Trans. on Inform. Theory}, 41(3):714--722, May 1995.

\bibitem{clarke90}
B.~Clarke and A.~Barron.
\newblock Information-theoretic asymptotics of {Bayes} methods.
\newblock {\em IEEE Trans. on Inform. Theory}, 36(3):453--471, 1990.

\bibitem{clarke94}
B.~Clarke and A.~Barron.
\newblock Jeffrey's prior is asymptotically least favorable under entropy risk.
\newblock {\em J. Statist. Plann. Inference}, 41:37--60, 1994.

\bibitem{Ama00}
S.-I. Amari and H.~Nagaoka.
\newblock {\em Methods of Information Geometry}.
\newblock American Mathematical Society, 2000.

\bibitem{Bor08}
S.~Borade and L.~Zheng.
\newblock {Euclidean Information Theory}.
\newblock In {\em IEEE International Zurich Seminar on Communications}, pages
  14--17, 2008.

\bibitem{Abbe10}
E.~Abbe and L.~Zheng.
\newblock Linear universal decoding for compound channels.
\newblock {\em IEEE Trans. on Inform. Theory}, 56(12):5999--6013, Dec 2010.

\bibitem{Ibragimov81}
I.~A. Ibragimov and R.~Z. Hasminskii.
\newblock {\em Statistical Estimation: Asymptotic Theory}.
\newblock Springer-Verlag, New York, 1981.

\bibitem{Ama01}
S.-I. Amari.
\newblock Information geometry on hierarchy of probability distributions.
\newblock {\em IEEE Trans. on Inform. Theory}, 47(5):1701--1711, 2001.

\bibitem{Zhang07}
Z.~Zhang.
\newblock Estimating mutual information via {Kolmogorov} distance.
\newblock {\em IEEE Trans. on Inform. Theory}, 53(9):3280--82, 2007.

\bibitem{Fannes73}
M.~Fannes.
\newblock A continuity property of the entropy density for spin lattice
  systems.
\newblock {\em Comm. Math. Phys.}, 31:291--297, 1973.

\bibitem{Barron98}
A.~Barron, J.~Rissanen, and B.~Yu.
\newblock The minimum description length principle in coding and modeling.
\newblock {\em IEEE Trans. on Inform. Theory}, 44(6):2743--2760, 1998.

\bibitem{Liese06}
F.~Liese and I.~Vajda.
\newblock On divergences and informations in statistics and information theory.
\newblock {\em IEEE Trans. on Inform. Theory}, 52(10):4394--4412, Oct 2006.

\end{thebibliography}

\begin{IEEEbiographynophoto}{Masahito Hayashi}(M'06--SM'13--F'17) was born in Japan in 1971.
He received the B.S.\ degree from the Faculty of Sciences in Kyoto
University, Japan, in 1994 and the M.S.\ and Ph.D.\ degrees in Mathematics from
Kyoto University, Japan, in 1996 and 1999, respectively. He worked in Kyoto University as a Research Fellow of the Japan Society of the
Promotion of Science (JSPS) from 1998 to 2000,
and worked in the Laboratory for Mathematical Neuroscience,
Brain Science Institute, RIKEN from 2000 to 2003,
and worked in ERATO Quantum Computation and Information Project,
Japan Science and Technology Agency (JST) as the Research Head from 2000 to 2006.
He also worked in the Superrobust Computation Project Information Science and Technology Strategic Core (21st Century COE by MEXT) Graduate School of Information Science and Technology, The University of Tokyo as Adjunct Associate Professor from 2004 to 2007.
He worked in the Graduate School of Information Sciences, Tohoku University as Associate Professor from 2007 to 2012.
In 2012, he joined the Graduate School of Mathematics, Nagoya University as Professor.
He also worked in Centre for Quantum Technologies, National University of Singapore as Visiting Research Associate Professor from 2009 to 2012
and as Visiting Research Professor from 2012 to now.
He also worked in Center for Advanced Intelligence Project, RIKEN as
a Visiting Scientist from 2017.
In 2011, he received Information Theory Society Paper Award (2011) for ``Information-Spectrum Approach to Second-Order Coding Rate in Channel Coding''.
In 2016, he received the Japan Academy Medal from the Japan Academy
and the JSPS Prize from Japan Society for the Promotion of Science.

In 2006, he published the book ``Quantum Information: An Introduction''  from Springer, whose revised version was published as ``Quantum Information Theory: Mathematical Foundation'' from Graduate Texts in Physics, Springer in 2016.
In 2016, he published other two books ``Group Representation for Quantum Theory'' and ``A Group Theoretic Approach to Quantum Information'' from Springer.
He is on the Editorial Board of {\it International Journal of Quantum Information}
and {\it International Journal On Advances in Security}.
His research interests include classical and quantum information theory and classical and quantum statistical inference.
\end{IEEEbiographynophoto}
\begin{IEEEbiographynophoto}{Vincent Y.\ F.\ Tan} (S'07-M'11-SM'15) was born in Singapore in 1981. He is currently an Assistant Professor in the Department of Electrical and Computer Engineering (ECE) and the Department of Mathematics at the National University of Singapore (NUS). He received the B.A.\ and M.Eng.\ degrees in Electrical and Information Sciences from Cambridge University in 2005 and the Ph.D.\ degree in Electrical Engineering and Computer Science (EECS) from the Massachusetts Institute of Technology in 2011. He was a postdoctoral researcher in the Department of ECE at the University of Wisconsin-Madison and a research scientist at the Institute for Infocomm (I$^2$R) Research, A*STAR, Singapore. His research interests include network information theory, machine learning, and statistical signal processing.

Dr.\ Tan received the MIT EECS Jin-Au Kong outstanding doctoral thesis prize in 2011, the NUS Young Investigator Award in 2014, the  Engineering Young Researcher Award  in the Faculty of Engineering, NUS in 2018,  and the Singapore National Research Foundation (NRF) Fellowship (Class of 2018). He has authored a research monograph on {\em ``Asymptotic Estimates in Information Theory with Non-Vanishing Error Probabilities''} in the Foundations and Trends in Communications and Information Theory Series (NOW Publishers). He is currently an Editor of the IEEE Transactions on Communications and the IEEE Transactions on Green Communications and Networking.
\end{IEEEbiographynophoto}

\end{document}